\def\eqref#1{equation~(\ref{#1})}
\def\1{\bm{1}}
\DeclareMathAlphabet{\mathsfit}{\encodingdefault}{\sfdefault}{m}{sl}
\SetMathAlphabet{\mathsfit}{bold}{\encodingdefault}{\sfdefault}{bx}{n}
\newcommand{\softmax}{\mathrm{softmax}}
\newtheorem*{rep@theorem}{\rep@title}
\newcommand{\newreptheorem}[2]{%
\newenvironment{rep#1}[1]{%
 \def\rep@title{#2 \ref{##1}}%
 \begin{rep@theorem}}%
 {\end{rep@theorem}}}
\newtheorem{theorem}{Theorem}
\newtheorem{lemma}{Lemma}
\newtheorem{definition}{Definition}
\newcommand\myeq{\mathrel{\overset{\makebox[0pt]{\mbox{\normalfont\tiny\sffamily def}}}{=}}}
\newcommand{\bE}{\mathbb{E}}
\newcommand{\bw}{\mathbf{w}}
\newcommand{\cD}{\mathcal{D}}
\newcommand{\ie}{{\it i.e.},~}  %
\newcommand{\eg}{{\it e.g.},~}  %
\newcommand{\simplex}{\Delta}
\newtheorem*{theorem-nonumber}{Theorem}
\newtheorem*{lemma-nonumber}{Lemma}
\definecolor{darkgreen}{RGB}{0,125,0}
\definecolor{darkblue}{RGB}{0,0,125}
\newcounter{mlNoteCounter}
\newcounter{klNoteCounter}
\newcounter{gpilNoteCounter}
\newcounter{imgempNoteCounter}
\title{Approximating the Core via Iterative Coalition Sampling}
\author{Ian Gemp}
\affiliation{
 \institution{Google DeepMind}
 \city{London}
 \country{United Kingdom}}
\email{imgemp@google.com}
\author{Marc Lanctot}
\affiliation{
 \institution{Google DeepMind}
 \city{Montreal}
 \country{Canada}}
\email{lanctot@google.com}
\author{Luke Marris}
\affiliation{
 \institution{Google DeepMind}
 \city{London}
 \country{United Kingdom}}
\email{marris@google.com}
\author{Yiran Mao}
\affiliation{
 \institution{Google DeepMind}
 \city{London}
 \country{United Kingdom}}
\email{yiranm@google.com}
\author{Edgar Du{\'e}{\~n}ez-Guzm{\'a}n}
\affiliation{
 \institution{Google DeepMind}
 \city{London}
 \country{United Kingdom}}
\email{duenez@google.com}
\author{Sarah Perrin}
\affiliation{
 \institution{Google DeepMind}
 \city{Paris}
 \country{France}}
\email{sarahperrin@google.com}
\author{Andras Gyorgy}
\affiliation{
 \institution{Google DeepMind}
 \city{London}
 \country{United Kingdom}}
\email{agyorgy@google.com}
\author{Romuald Elie}
\affiliation{
 \institution{Google DeepMind}
 \city{Paris}
 \country{France}}
\email{relie@google.com}
\author{Georgios Piliouras}
\affiliation{
 \institution{Google DeepMind}
 \city{London}
 \country{United Kingdom}}
\email{gpil@google.com}
\author{Michael Kaisers}
\affiliation{
 \institution{Google DeepMind}
 \city{Paris}
 \country{France}}
\email{mkaisers@google.com}
\author{Daniel Hennes}
\affiliation{
 \institution{Google DeepMind}
 \city{Zürich}
 \country{Switzerland}}
\email{hennes@google.com}
\author{Kalesha Bullard}
\affiliation{
 \institution{Google DeepMind}
 \city{London}
 \country{United Kingdom}}
\email{ksbullard@google.com}
\author{Kate Larson}
\affiliation{
 \institution{Google DeepMind}
 \city{Montreal}
 \country{Canada}}
\email{katelarson@google.com}
\author{Yoram Bachrach}
\affiliation{
 \institution{Google DeepMind}
 \city{London}
 \country{United Kingdom}}
\email{yorambac@google.com}
\begin{abstract}
The core is a central solution concept in cooperative game theory, defined as the set of feasible allocations or payments such that no subset of agents has incentive to break away and form their own subgroup or coalition.  However, it has long been known that the core (and approximations, such as the least-core) are hard to compute. This limits our ability to analyze cooperative games in general, and to fully embrace cooperative game theory contributions in domains such as explainable AI (XAI), where the core can complement the Shapley values to identify influential features or instances supporting predictions by black-box models. 
We propose novel iterative algorithms for computing variants of the core, which avoid the computational bottleneck of many other approaches; namely solving large linear programs. As such, they scale better to very large problems as we demonstrate across different classes of cooperative games, including weighted voting games, induced subgraph games, and marginal contribution networks. We also explore our algorithms in the context of XAI, providing further evidence of the power of the core  for such applications.
\end{abstract}
\keywords{Cooperative Game Theory; Core; Explainable AI}
\newcommand{\BibTeX}{\rm B\kern-.05em{\sc i\kern-.025em b}\kern-.08em\TeX}
\begin{document}

\pagestyle{fancy}
\fancyhead{}

\maketitle

\section{Introduction}

Coalitional game theory studies games played by self-interested players where incentives are aligned and fully-binding contracts are supported~\cite{Chalkiadakis12Computational}. The utility achieved depends on which group, \ie which {\it  coalition}, of players is formed to solve the game. The key problem that researchers have focused on is how this total utility should be divided among the individual members of the team. Several solution concepts have been proposed to address this question, with the core~\cite{gillies1953} and the Shapley value~\cite{Shapley51} being among the best known. The Shapley value quantifies the contribution of each individual player via the differences in value obtained over all possible subgroups which include and exclude that player.
In contrast, the core is a rough cooperative analogue of the Nash equilibrium: a utility division that is stable, \ie no subset of players has incentive to deviate from the chosen coalition. 

Cooperative game theory has found many applications, ranging from analyzing power in decision making bodies and political settings~\cite{shapley1954method}, through predicting how players may share team payoffs or costs~\cite{littlechild1977aircraft,dubey1982shapley} to risk attribution~\cite{tarashev2016risk}. One prominent application of cooperative game theory is in analyzing what drives the predictions of machine learning models. As more of our decisions are now informed by learned models, it is crucial to understand {\it why} our models output what they do, which is the main goal of explainable AI (XAI)~\cite{Barredo20Explainable}.
The Shapley value has been adopted in the machine learning community to explain model predictions~\cite{lundberg2017unified}: 
the features are treated as players, and the payoff to a coalition is the output of the model when training exclusively on the subset of input features in the coalition.
The Shapley value then computes the marginal contribution of each feature to the output, offering additive explanations for the effect of each feature on prediction.

However, there are some limitations regarding Shapley values: it may produce counter-intuitive explanations~\cite{sundararajan2020many}, behavioral studies have shown the core can be more predictive of human payment divisions, and accurate estimation requires many samples~\cite{maleki2013bounding,bachrach2010approximating,mitchell2022sampling}\footnote{See Appendix~\ref{sec:shapley_vs_core} for a more comprehensive discussion of the differences between the Shapley value and the core.}.
As such, the core has been recently motivated as a possible alternative to Shapley values for XAI~\cite{YanProcaccia21}. Unlike Shapley values, the core focuses on stability~\cite{Chalkiadakis12Computational}: a payment division is {\it stable} if no player has incentive to deviate from the coalition and form a different one on their own.
Consequently, core payments may better reflect each player's ``market value''~\cite{samet1984core,day2008core}.  

A key barrier to applying cooperative game theory in practice is the high computational cost associated with calculating the various solution concepts, which may be hard to compute even for very restricted forms of cooperative games~\cite{Chalkiadakis12Computational}. %
For some solutions, such as the Shapley value and other power indices, there exist efficient approximation algorithms~\cite{bilbao2000generating,castro2009polynomial}, e.g.,
an approximate form of the Shapley value can be computed using Monte Carlo sampling~\cite{fatima2008linear,bachrach2010approximating,mitchell2022sampling}, which might explain its wide adoption. 

In contrast, solutions based on the core have proved to be more elusive; there exist multiple approaches that can approximate or exactly solve for the core, but can only be applied to very restricted classes of coalitions games~\cite{kern2003matching,elkind2009computing,bachrach2011least,kimms2016core}. Computing the core exactly requires solving a linear program (LP) with an exponential number of constraints (one per each possible subset of the agents).
Alternatively, assuming access to incrementally sampled coalition values, bounds can be given for likely stable payments that relate to the core~\cite{balcan2015learning}. 
A recent breakthrough in approximating the core is a Monte Carlo algorithm that samples coalitions in this way, yielding probabilistic bounds on approximation quality~\cite{YanProcaccia21}. This approach significantly expands the set of games for which one can tractably approximate the core, albeit it still requires solving LPs, where the size of the LP grows linearly in the number of samples. 

{\bf Our contribution:}
In this paper, we introduce novel iterative algorithms for the core. None of our algorithms require solving a linear program: as such, they enjoy greater scaling potential to very large problems. We demonstrate this efficiency in practice across several classes of coalition games including weighted voting games~\cite{elkind2009computing}, induced subgraph games~\cite{deng_papa_1994}, marginal contribution networks~\cite{ieong2005marginal}, and apply them to feature-importance and data-valuation problems arising in XAI~\cite{Barredo20Explainable}.

\section{Background}
\label{l_preliminaries}

The central concept in cooperative game theory is that of a {\em coalition}. Given a set of $n$ agents, $I=\{1,2,\ldots, n\}$, a coalition is a subset of agents, $C\subseteq I$, where the {\em grand coalition} is the coalition containing all agents in $I$. A coalitional game, $G$,  is further coupled with a characteristic function, $v:2^I\rightarrow \mathbb{R}$ that assigns a real value to each coalition of agents, representing the total utility that the coalition of agents achieve together. We study transferable utility games, where $v(C)$ is interpreted as value that is to be shared and transferred between members of the coalition $C$. As is standard, $v(\emptyset)=0$.

The characteristic function only defines the gains a coalition can achieve; it does not define how these gains are distributed among the coalition members. An \emph{imputation} (also called payoff vector) $p = (p_1,\ldots ,p_n)$ is a division of the gains of the grand coalition $I$ among the agents, where $p_i \ge 0$, such that $\sum_{i=1}^n p_i = v(I)$. We call $p_i$ the payoff of agent $i$, and denote the payoff of a coalition $C$ as $p(C) = \sum_{i \in C} p_i$. 

A basic requirement for a good imputation is \emph{individual rationality}, which states that for all agents $i \in C$, we have $p_i \geq v(\{i\})$; otherwise, some agent has incentives to work alone instead. Similarly, we say a coalition $B$ \emph{blocks} imputation $(p_1,\ldots,p_n)$ if $p(B) < v(B)$, since the members of $B$ can split from the original coalition, derive
the gains of $v(B)$ by working together and then reallocate this \emph{excess}, $v(B)-p(B)>0$, to agents in $B$. 
This incentive to break away and form new coalitions leads to instability and has long been a focus of cooperative game theory. 
 The most prominent solution concept is that of the core~\cite{gillies1953}.

\begin{definition}
\label{l_def_core}
The core of a coalitional game $G$  is the set of all imputations that are not blocked by any coalition.  That is it contains imputation $p = (p_1,\ldots,p_n)$ if and only if for all $C\subseteq I$, $p(C) \geq v(C)$.
\end{definition}

Unfortunately, the core can be empty, meaning that for every imputation, there exists a  blocking coalition.  Thus, relaxations of the core are often studied by, for example, assuming that the gains made by forming a blocking coalition are small.  The $\epsilon$-\emph{core} allows for such  slight relaxations of the inequalities in the core definition.

\begin{definition}
\label{l_epsilon_core}
Given $\epsilon$, the $\epsilon$-core  of coalitional game $G$ is the set of all imputations such that  for any  $C \subseteq I$, $p(C) \geq v(C) - \epsilon$, i.e., $\epsilon$-core $\myeq \{ (p_1,\ldots,p_n) \,\, \vert \,\, p(C) \geq v(C) - \epsilon \,\, \forall \,\, C \subseteq I \}$.
\end{definition}
Clearly, for large enough values of $\epsilon$, the $\epsilon$-core is non-empty. Furthermore, if $\epsilon=0$, the definition of the $\epsilon$-core is equivalent to the core.
A natural question is what is the smallest $\epsilon$ such that the $\epsilon$-core is non-empty.
Given a game $G$ we consider the set $\{ \epsilon \vert$ the $\epsilon$-core of G is not empty$\}$. It is easy to see that this set is compact, and thus has a minimal element $\epsilon_{min}$. 

\begin{definition}
\label{l_least_core}
Given coalitional game $G$, define $\epsilon_{\min} = \min \{\epsilon|$ $\epsilon$-core of $G$ is non-empty$\}$. The least core of $G$ is the $\epsilon_{min}$-core of $G$, and this value $\epsilon_{min}$ is called the Least-Core Value (LCV) of $G$.
\end{definition}

\section{Approximating the Core through Iterative Payoff Adjustments via Coalition Sampling}
\label{sect_core_approx_algos}

Earlier work has shown that given a game $G$, checking if the core is non-empty is $NP$-hard~\cite{deng_papa_1994} and, furthermore, that computing the least-core exactly is impossible~\cite{Balkanski17Statistical}. This opens up the question we address in this section;  how to best approximate the least core.

The least-core can be formulated as a linear programming (LP) problem. Let $c$ be a binary coalition-membership vector for coalition $C$ such that $c_i=1$ if $i\in C$ and 0 otherwise. When clear from the context we  sometimes use $c$ to represent coalition $C$.  Furthermore, without loss of generality, assume that $v(I)=1$, re-scaling $v$ with a factor $\frac1{v(I)}$ if necessary. 
Then the least-core problem is to solve
\begin{align}
    \min_{p, \epsilon}& \quad \epsilon \label{eq:lp-obj}
    \\ s.t.& \quad v(c) - \epsilon - p^\top c \le 0 \quad \forall \,\,  c \label{eq:coalition-constraints}
    \\ & \quad\sum_{i=1}^n p_i = 1
    \\ & \quad p_i \ge 0 \quad \forall \,\, i \in I. \label{eq:nonzero-payoff}
\end{align}
\noindent The challenge is that the Constraint~\ref{eq:coalition-constraints} relates to $2^n$ coalitions, making it infeasible to solve for games with many players. A recent method to approximating the least-core randomly samples some number of the constraints imposed on the coalitions and solves the resulting LP~\cite{YanProcaccia21}. However, even when only using sub-sampled constraints, solving LPs is still time consuming.\footnote{Methods that guarantee polynomial runtime for solving LPs~\cite{karmarkar1984new} have a runtime that is cubic in the number of parameters. An efficient implementation of the Simplex method~\cite{dantzig1963linear} is much faster in practice, but is still time consuming for large LPs.}
We present alternative iterative algorithms that avoid the requirement of solving the LP, circumventing a significant computational bottleneck. 

Our algorithms are presented in increasing complexity. Our first two algorithms, Iterative Projections (Section~\ref{sec:alg:iterative_proj}) and Stochastic Subgradient Descent (Section~\ref{sec:subgrad}) take a given value of $\epsilon$ and seek an imputation in the $\epsilon$-core (assuming the $\epsilon$-core is non-empty). In order to compute the least-core, one may apply an outer loop that calls these methods so as to perform a binary search for the minimal value $\epsilon_{min}$ yielding a non-empty $\epsilon_{min}$-core. Our Core Lagrangian method (Section~\ref{sec:core-lagrangian}) directly returns the least-core, including both the $\epsilon_{min}$ (least-core value), and an imputation in the $\epsilon_{min}$-core. Our empirical results in Section~\ref{section:empirical_analysis} are reported for this algorithm alone.

\subsection{The $\epsilon$-Core via Iterative Projections}
\label{sec:alg:iterative_proj}
We observe that each Constraint~\ref{eq:coalition-constraints} on $p$ represents a half-space, namely a convex set. Assuming that there exists some $p$ such that all constraints can be satisfied for a given $\epsilon$, the corresponding constraint satisfaction problem  is to find $p$ such that Constraint~\ref{eq:coalition-constraints} is satisfied. This is equivalent to solving a convex feasibility problem by finding a $p$ on the $(n-1)$-dimensional simplex $\simplex$ that lies within the intersection of all these convex sets, a problem that can be solved via von Neumann-Halperin method of cyclic alternating projections~\citep{bauschke1996projection,deutsch1995dykstra,badea2016ritt}. We start with an initial guess for $p$ and then cycle through the constraints (for each coalition, $C$), iteratively modifying the guess by projecting it onto the feasible set represented by each individual constraint. We observe that the projection, $\texttt{Proj}_{c,\epsilon}(p)$, of an imputation $p$ onto a half-space $p^\top c \ge v(C) - \epsilon$ has a closed form solution that can be computed efficiently:
\begin{align}
  \texttt{Proj}_{c,\epsilon}(p) &= \begin{cases}
    p &   v(C) - \epsilon - p^\top c \le 0
    \\ p - (p^\top c + \epsilon - v(C)) / ||c||^2 c & \text{ otherwise}
    \end{cases} \nonumber
    \\ &= p + \frac{d_c}{\vert c \vert} c \quad \text{ where $\vert c \vert = $ the size of the coalition}\label{eqn:proj_halfspace}
\end{align}
and where $d_c = \max(0, v(C) - \epsilon - p^\top c)$ is the deficit. If $v(C) - \epsilon - p^\top c < 0$, players in coalition $C$ are being paid \emph{more} than they are actually contributing (where the payment is $\epsilon + p^\top c$). Intuitively, players are incentivized to remain in a coalition if $d_c = 0$, otherwise, the coalition is unstable. Algorithm~\ref{alg:iterative_proj} formalizes this method. 

\begin{theorem}
Let $v : 2^I \rightarrow \mathbb{R}$ be a characteristic function. Given an $\epsilon$ and assuming the $\epsilon$-core exists, Algorithm~\ref{alg:iterative_proj} converges to an $\epsilon$-core imputation asymptotically, i.e.: $\lim_{T \rightarrow \infty} p_T \rightarrow \epsilon$-core$(v)$.
\end{theorem}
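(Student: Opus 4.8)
The plan is to identify Algorithm~\ref{alg:iterative_proj} with the classical von Neumann--Halperin method of cyclic alternating projections onto a \emph{finite} family of closed convex sets in $\mathbb{R}^n$, and then to run the standard Fej\'er-monotonicity argument for its convergence. For each coalition $C \subseteq I$ let $H_C = \{p \in \mathbb{R}^n : p^\top c \ge v(C) - \epsilon\}$; this is a closed half-space, hence closed and convex, and the simplex $\simplex = \{p : p \ge 0,\ \mathbf{1}^\top p = 1\}$ is compact and convex. The $\epsilon$-core is precisely $E := \simplex \cap \bigcap_{C \subseteq I} H_C$, which is nonempty by the standing hypothesis. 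The map $\texttt{Proj}_{c,\epsilon}$ of \eqref{eqn:proj_halfspace} is exactly the Euclidean projection onto $H_C$, and $\texttt{Proj}_\simplex$ is the (standard, closed-form) projection onto $\simplex$. Whatever the precise bookkeeping in Algorithm~\ref{alg:iterative_proj}, the only structural facts I will use are that it generates a sequence $(p_k)$ by repeatedly replacing the current iterate with its projection onto the next set in a fixed periodic schedule of period $m$ that visits every member of this finite family once per period (the simplex projection being what re-normalizes the excess mass that each half-space projection injects into $\mathbf{1}^\top p$).

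First I would record two elementary properties of the metric projection $\texttt{Proj}_K$ onto a closed convex set $K \subseteq \mathbb{R}^n$: (i) it is nonexpansive, hence continuous; and (ii) for every $z \in K$ and every $x$, $\|\texttt{Proj}_K(x) - z\|^2 \le \|x - z\|^2 - \|x - \texttt{Proj}_K(x)\|^2$, which is the obtuse-angle characterization of projections rearranged via Pythagoras. Fix any $z \in E$ (it exists by hypothesis). Applying (ii) at each step, $\|p_k - z\|$ is nonincreasing, hence convergent, so $(p_k)$ is bounded; and telescoping the same inequality gives $\sum_{k \ge 0} \|p_{k+1} - p_k\|^2 \le \|p_0 - z\|^2 < \infty$, whence $\|p_{k+1} - p_k\| \to 0$.

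Next I would extract the limit point. By boundedness some subsequence converges, say $p_{k_j} \to p^*$; I claim $p^* \in E$. Fix any set $K$ of the family. By periodicity of the schedule with period $m$, for each $j$ there is an index $n_j$ with $k_j < n_j \le k_j + m$ at which the schedule projects onto $K$, i.e.\ $p_{n_j} = \texttt{Proj}_K(p_{n_j-1})$. Since $\|p_{k+1} - p_k\| \to 0$ and $|n_j - k_j|$ is bounded by $m$, both $p_{n_j-1} \to p^*$ and $p_{n_j} \to p^*$; by continuity of $\texttt{Proj}_K$ this forces $p^* = \texttt{Proj}_K(p^*)$, i.e.\ $p^* \in K$. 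As $K$ was arbitrary, $p^* \in E$. Finally, applying (ii) once more with the specific point $z = p^* \in E$ shows $\|p_k - p^*\|$ is nonincreasing; since it has a subsequence tending to $0$, the whole sequence tends to $0$, so $p_k \to p^*$ with $p^* \in \epsilon$-core$(v)$, which is the claim.

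There is no deep obstacle here --- this is essentially the textbook convergence proof for cyclic projections \citep{bauschke1996projection,deutsch1995dykstra} --- but two points deserve care. First, one must genuinely use \emph{both} that the step sizes vanish \emph{and} that the cyclic schedule touches every constraint (including the simplex) infinitely often in order to conclude that the subsequential limit lies in $E$; neither alone suffices. Second, one should confirm that Algorithm~\ref{alg:iterative_proj} indeed projects back onto $\simplex$ within each period, since a single half-space projection adds mass $d_c$ to $\mathbf{1}^\top p$ and the iterate would otherwise drift off the imputation set. As a remark, because every set in the family is polyhedral the family is boundedly linearly regular, so the convergence is in fact linear; but the asymptotic statement claimed in the theorem already follows from the elementary argument above, and for an arbitrary starting point $p_0$ (not necessarily in $\simplex$).
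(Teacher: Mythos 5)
Your proposal is correct and takes essentially the same route as the paper: the paper's proof consists of identifying Algorithm~\ref{alg:iterative_proj} with von Neumann--Halperin cyclic projections and appealing directly to the classical convergence results cited therein, which is precisely the reduction you make. The only difference is that you write out the standard Fej\'er-monotonicity argument in full (boundedness, vanishing step norms, subsequential limit in every set of the finite family via the bounded-gap periodic schedule, then convergence of the whole sequence), giving a self-contained proof of the cited classical fact for general closed convex sets.
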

\begin{proof}
We can appeal directly to classical convergence results of the von Neumann-Halperin cyclic projections algorithm~\citep{badea2016ritt}.
\end{proof}

\begin{algorithm}[t]
\caption{$\epsilon$-Core via von Neumann-Halperin}\label{alg:iterative_proj}
\begin{algorithmic}
\REQUIRE Number of iterations $T$
\STATE $p_0 = \frac{1}{n} \mathbf{1}_n$
\FOR{$t = 1 \le T$}
\STATE Select coalition $C$ (cyclically) as binary vector $c$
\STATE $d_c = \max(0, v(C) - \epsilon - p^\top c)$
\STATE $p_t \leftarrow p_{t-1} + \frac{d_c}{\vert C \vert} c$
\STATE $p_t \leftarrow \texttt{Proj}_{\simplex}(p_t)$
\ENDFOR
\ENSURE $p_T$
\end{algorithmic}
\end{algorithm}

\subsection{The $\epsilon$-Core via Stochastic (Sub)Gradient Descent}\label{sec:subgrad}

While projecting an imputation $p$ onto a hyperplane can be computed efficiently,  cycling through all constraints is not, and it is also computationally expensive to update $p$ on a batch of constraints as that involves projecting $p$ into the intersection of a set of half-spaces.
Instead, we propose a new approach that re-formulates the projection problem as an optimization problem and admits an efficient batch algorithm.

We define a loss function, $\ell_c(p,\epsilon)$ for each coalition $C$ (represented by vector $c$):
\begin{align}
    \ell_c(p, \epsilon) &= \frac{1}{2 \vert c \vert}\big(\max(0, v(C) - \epsilon - p^\top c)\big)^2 = \frac{d_c^2}{2 \vert c \vert}.
\end{align}
Observe that  if $v(C) - \epsilon - p^\top c \le 0$, $\ell_c = 0$, otherwise, we accrue some positive loss for not satisfying the constraint for coalition $c$. Furthermore, a valid negative (sub)gradient of $\ell_c$ with respect to $p$ is exactly the update direction computed by the projection $\texttt{Proj}_{c,\epsilon}$ from~\eqref{eqn:proj_halfspace}:
\begin{align}
    -\nabla_p \ell_c &= \frac{d_c}{\vert c \vert} c. \label{eqn:subgrad}
\end{align}

Therefore, if we run stochastic gradient descent (SGD) with learning rate equal to $1$ on $\sum_c \ell_c(p)$, sampling one coalition at a time,  we recover Algorithm~\ref{alg:iterative_proj}. However, it is also possible to increase the minibatch size, sampling multiple coalitions at one time, as shown in Algorithm~\ref{alg:stochastic_gradient}. Theorem~\ref{prop:sgd-approx} shows that if Algorithm~\ref{alg:stochastic_gradient} returns an imputation that is approximately in the $\epsilon$-core, then it provably lies in the $\epsilon'$-core, albeit with $\epsilon'>\epsilon$. The proof can be found in Appendix~\ref{app:theorem1proof}.

\begin{algorithm}
\caption{$\epsilon$-Core via SGD}\label{alg:stochastic_gradient}
\begin{algorithmic}
\REQUIRE Number of iterations $T$
\REQUIRE Batch size $B$
\REQUIRE Step size schedule $\eta_t$
\STATE $p = \frac{1}{n} \mathbf{1}_n$
\FOR{$t = 1 \le T$}
\STATE Sample batch $C_B$ containing $B$ coalitions
\STATE Compute average gradient over batch: $\nabla_p = \frac{1}{B} \sum_{C \in C_B} \nabla_p \ell_c$
\STATE $p \leftarrow p - \eta_t \nabla_p$
\STATE $p \leftarrow \texttt{Proj}_{\simplex}(p)$
\ENDFOR
\STATE Sample $t^* \in \{0, \ldots, T\}$ according to $P(t^* = t) = \frac{\eta_t}{\sum_{t'} \eta_{t'}}$.
\ENSURE $p_{t^*}$
\end{algorithmic}
\end{algorithm}

\begin{theorem}\label{prop:sgd-approx}
If $\ell_c(\epsilon, p) \le \gamma^2$ for all $c$, then $p$ is in the $(\epsilon+\sqrt{2n}\gamma)$-core.
\end{theorem}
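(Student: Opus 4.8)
The plan is to prove this directly by unwinding the definitions of the loss $\ell_c$ and the deficit $d_c$, together with the crude bound $|c| \le n$ on coalition sizes. There is essentially no heavy machinery needed: the statement is a deterministic consequence of the pointwise hypothesis $\ell_c(\epsilon,p)\le\gamma^2$.

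First I would rewrite the hypothesis using the explicit form $\ell_c(p,\epsilon) = d_c^2/(2|c|)$ where $d_c = \max(0, v(C) - \epsilon - p^\top c)$. The bound $\ell_c(\epsilon,p) \le \gamma^2$ then reads $d_c^2 \le 2|c|\gamma^2$, hence $d_c \le \sqrt{2|c|}\,\gamma$. Since every coalition $C \subseteq I$ satisfies $|c| \le n$, this upgrades to the uniform bound $d_c \le \sqrt{2n}\,\gamma$ valid simultaneously for all coalitions. Next, since $d_c = \max(0, v(C)-\epsilon-p^\top c) \ge v(C)-\epsilon-p^\top c$, I combine the two inequalities to get $v(C) - \epsilon - p^\top c \le \sqrt{2n}\,\gamma$ for every $C$, i.e. $p(C) = p^\top c \ge v(C) - (\epsilon + \sqrt{2n}\,\gamma)$. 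This is exactly the defining condition of the $(\epsilon+\sqrt{2n}\,\gamma)$-core in Definition~\ref{l_epsilon_core}; since the $p$ in question is obtained after a projection onto the simplex $\simplex$, it is a bona fide imputation ($p_i \ge 0$ and $\sum_i p_i = 1 = v(I)$ under the earlier WLOG normalization), so membership in the $(\epsilon+\sqrt{2n}\,\gamma)$-core is well-defined, completing the proof.

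There is no real obstacle here; the only points to state carefully are that the hypothesis is assumed to hold for \emph{all} coalitions $c$ (not merely those sampled during a run of Algorithm~\ref{alg:stochastic_gradient}), so the conclusion genuinely ranges over all $C \subseteq I$, and that the $\sqrt{2n}$ factor arises from substituting the worst-case coalition size $|c| = n$ into $\sqrt{2|c|}$. One could retain the sharper coalition-dependent slack $\sqrt{2|c|}\,\gamma$ per constraint, but the uniform bound $\epsilon + \sqrt{2n}\,\gamma$ is all that is claimed.
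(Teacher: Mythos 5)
Your proposal is correct and follows essentially the same route as the paper's proof: bound $d_c \le \sqrt{2|c|}\,\gamma \le \sqrt{2n}\,\gamma$ from $\ell_c = d_c^2/(2|c|) \le \gamma^2$, then use $d_c \ge v(C) - \epsilon - p^\top c$ to obtain the $(\epsilon+\sqrt{2n}\gamma)$-core inequality for every coalition. Your additional remarks (handling the $\max$ explicitly and noting that the simplex projection makes $p$ a valid imputation) are fine clarifications but do not change the argument.
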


\begin{theorem}
Let $v : 2^I \rightarrow \mathbb{R}$ be a characteristic function. Given an $\epsilon$ and assuming the $\epsilon$-core exists, Algorithm~\ref{alg:stochastic_gradient} converges to an $\epsilon$-core imputation in expectation at a rate of $\mathcal{O}(T^{-1/4})$.
\end{theorem}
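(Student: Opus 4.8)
The plan is to cast Algorithm~\ref{alg:stochastic_gradient} as projected stochastic subgradient descent on the finite-sum objective $F(p) = \sum_c \ell_c(p,\epsilon)$ over the simplex $\Delta$, and then invoke a standard convergence bound for nonsmooth convex stochastic optimization combined with the error-to-feasibility conversion already established in Theorem~\ref{prop:sgd-approx}. First I would note that each $\ell_c(\cdot,\epsilon)$ is convex in $p$ (it is the square of a nonnegative affine-composed-with-max, i.e.\ a composition of the convex increasing map $x\mapsto \tfrac{1}{2|c|}(\max(0,x))^2$ with an affine function), so $F$ is convex, and the domain $\Delta$ is convex and compact. Hence $F$ attains its minimum; when the $\epsilon$-core is nonempty, $\min_{p\in\Delta} F(p) = 0$, attained at any $\epsilon$-core imputation $p^\star$. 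The minibatch gradient $\nabla_p = \tfrac{1}{B}\sum_{C\in C_B}\nabla_p\ell_c$ is an unbiased estimate of $\tfrac{1}{2^n}\nabla F$ up to the obvious normalization (or one simply redefines $F$ as the average over uniformly sampled coalitions); either way the iteration $p \leftarrow \texttt{Proj}_\Delta(p - \eta_t \nabla_p)$ is exactly projected SGD.

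The key steps, in order: (i) bound the second moment of the stochastic subgradient — since $\|c\| \le \sqrt{n}$, $|c|\ge 1$, $d_c = \max(0, v(C)-\epsilon - p^\top c)$ is bounded on $\Delta$ by a constant depending on $\max_C |v(C)|$ and $\epsilon$, we get $\mathbb{E}\|\nabla_p\ell_c\|^2 \le G^2$ for an explicit $G$; (ii) apply the classical projected-SGD inequality: for the randomized output $p_{t^\star}$ with sampling probability proportional to $\eta_t$, $\mathbb{E}[F(p_{t^\star})] - \min F \le \frac{\|p_0-p^\star\|^2 + G^2\sum_t \eta_t^2}{2\sum_t \eta_t}$, using $\|p_0 - p^\star\|^2 \le 2$ (both on the simplex); (iii) optimize the step-size schedule — with $\eta_t = c/\sqrt{t}$ (or constant $\eta = c/\sqrt{T}$) the right-hand side is $\mathcal{O}(T^{-1/2}\log T)$, or with the $\eta_t \propto 1/\sqrt{t}$ weighting scheme exactly $\mathcal{O}(T^{-1/2})$; hence $\mathbb{E}[F(p_{t^\star})] = \mathcal{O}(T^{-1/2})$, which means $\mathbb{E}[\max_c \ell_c(p_{t^\star},\epsilon)] = \mathcal{O}(T^{-1/2})$ as well since $\max_c \ell_c \le \sum_c \ell_c = F$; (iv) set $\gamma^2 := \max_c \ell_c(p_{t^\star},\epsilon)$, so $\gamma = \mathcal{O}(T^{-1/4})$ in expectation, and feed this into Theorem~\ref{prop:sgd-approx} to conclude $p_{t^\star}$ lies in the $(\epsilon + \sqrt{2n}\,\gamma)$-core with $\sqrt{2n}\,\gamma = \mathcal{O}(T^{-1/4})$ in expectation. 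Reading ``converges to an $\epsilon$-core imputation at rate $\mathcal{O}(T^{-1/4})$'' as: the additive slack $\epsilon' - \epsilon$ by which $p_{t^\star}$ fails to be in the $\epsilon$-core shrinks at rate $\mathcal{O}(T^{-1/4})$ — this is precisely what steps (iii)–(iv) give.

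The main obstacle is the two-level loss structure: the natural SGD guarantee controls the \emph{average} suboptimality $\mathbb{E}[F(p_{t^\star})] - \min F$, but membership in an $\epsilon$-core is a \emph{uniform} (worst-coalition) statement, and Theorem~\ref{prop:sgd-approx} requires a bound on $\max_c \ell_c$, not on $F$ or on $\sqrt{F}$. The slack here is why the rate degrades from the $\mathcal{O}(T^{-1/2})$ of the objective to $\mathcal{O}(T^{-1/4})$ of the core-approximation error: one pays a square root to pass from $\ell_c \le \gamma^2$ to the $\sqrt{2n}\gamma$ bound. A secondary subtlety is that $F$ is only nonsmooth-convex (the $\max(0,\cdot)^2$ composition is $C^1$ but not strongly convex and its gradient is only locally Lipschitz, not globally so on all of $\mathbb{R}^n$ — though on the compact $\Delta$ it is Lipschitz), so one should be careful to use the subgradient-method rate rather than a smooth-SGD rate; using the nonsmooth bound is safe and suffices. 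I would also remark that the expectation is over the batch randomness and the index $t^\star$, and that a high-probability version follows by Markov or by a standard concentration argument if desired.
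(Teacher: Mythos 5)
Your proposal is correct, but it takes a genuinely different route from the paper's proof. The paper's argument is a two-line appeal to stationarity rates for stochastic projected subgradient methods~\cite{davis2018stochastic}: those results give an $\mathcal{O}(1/\sqrt{T})$ bound on the expected squared (projected) gradient norm, and since by~\eqref{eqn:subgrad} the squared gradient of a sampled loss is proportional to $d_c^2$, the violation $d_c$ decays as $\mathcal{O}(T^{-1/4})$. You instead stay inside the classical convex framework: projected SGD on the convex objective $\sum_c \ell_c(\cdot,\epsilon)$ over the simplex, with the $\eta_t$-weighted randomized output $t^*$ that Algorithm~\ref{alg:stochastic_gradient} actually uses, gives an $\mathcal{O}(1/\sqrt{T})$ expected optimality gap; the hypothesis that the $\epsilon$-core is nonempty is used to say the optimal value is $0$; the inequality $\max_c \ell_c \le \sum_c \ell_c$ converts the gap into a uniform bound $\gamma^2=\mathcal{O}(1/\sqrt{T})$; and Theorem~\ref{prop:sgd-approx} together with Jensen turns that into an expected core slack $\sqrt{2n}\,\gamma=\mathcal{O}(T^{-1/4})$. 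What your route buys: it is elementary and self-contained, it explicitly uses the nonemptiness assumption in the theorem statement (the paper's stationarity argument never invokes it explicitly), it matches the algorithm's output rule exactly, and it avoids the gloss in the paper's proof where the ``squared gradient'' of the full problem (a sum over all coalitions, plus the simplex projection) is identified with a single $d_c^2$. What the paper's route buys is brevity, and the cited stationarity rates would apply even without exploiting convexity of the value gap. One caveat common to both arguments, worth stating in yours: the constant hidden in the $\mathcal{O}(\cdot)$ is exponential in $n$, since passing from the uniformly sampled (average) loss that SGD controls to the worst coalition costs a factor $2^n$ (in your step, $\max_c \ell_c \le \sum_c \ell_c = 2^n\,\mathbb{E}_c[\ell_c]$); this does not affect the claimed rate in $T$, which is all the theorem asserts.
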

\begin{proof}
We can appeal directly to convergence rates of stochastic projected subgradient algorithms for convex optimization~\cite{davis2018stochastic}. These rates prove Algorithm~\ref{alg:stochastic_gradient} converges to a stationary point with $\mathcal{O}(1/\sqrt{T})$ expected squared gradient norm. Recall~\eqref{eqn:subgrad} to determine that the squared gradient is proportional to $d_c^2$. Therefore, the violation of the core constraints, $d_c$, decays as $\mathcal{O}(T^{-1/4})$.
\end{proof}

\subsection{The Least-Core as a Saddle Point Problem}
\label{sec:core-lagrangian}
Algorithms~\ref{alg:iterative_proj} and~\ref{alg:stochastic_gradient} assume an $\epsilon$ is provided such that all constraints represented by Constraint~\ref{eq:coalition-constraints} are satisfied.\footnote{Recall that such an $\epsilon$ is guaranteed to exist.} However, we are really interested in finding the smallest such  $\epsilon$ for which the constraints still hold, i.e. the least-core value (LCV). To this end, we reformulate the original LP, making use of a single non-linear constraint:
\begin{align}
    \min_{p \in \simplex, \epsilon}& \quad \epsilon \label{eqn:obj}
    \\ s.t.& \quad \sum_{C \subseteq I} \ell_c(p, \epsilon) \le \gamma^2 \label{eqn:con}
\end{align}
for some constant $\gamma>0$. Note that if $\gamma = 0$, we would recover the solution to the least-core LP in~(\ref{eq:lp-obj})-(\ref{eq:nonzero-payoff}). For $\gamma > 0$, we can recover an approximate solution via Theorem~\ref{prop:sgd-approx}. While this form is no longer an LP, it retains a crucial property: convexity.  Each $\ell_c(p, \epsilon)$ is convex in $p$ and $\epsilon$, and hence the nonlinear constraint $\sum_{C \subseteq I} \ell_c(p, \epsilon) - \gamma^2$ remains convex in $p$ and $\epsilon$. Similarly  Objective~\ref{eqn:obj} is convex as it is linear in $\epsilon$ and $p$. This  allows us to view the optimization problem as a saddle-point problem using Lagrange multipliers.\footnote{Note that we introduce the constant $\gamma>0$ so as to ensure Slater's condition holds so as to meet the necessary and sufficient conditions for optimality of the subsequent saddle point solution~\cite{boyd2004convex}.}  We first observe that via Karush-Kuhn-Tucker conditions, convexity in the objective and constraint is sufficient for optimality of the solution to the corresponding Lagrangian formulation (see Sec 5.5 of~\cite{boyd2004convex}):
\begin{align}
    \min_{p \in \simplex, \epsilon \in [\underline{\epsilon}, \overline{\epsilon}]} \max_{\mu \in [0, \overline{\mu}]} \mathcal{L}(p, \epsilon, \mu) = \epsilon + \mu \Big( \sum_{C \subseteq I} \ell_c(p, \epsilon) - \gamma^2 \Big). \label{eqn:lagrangian}
\end{align}

We can bound both $\epsilon$ and $\mu$ which ensures the function values and gradients are bounded as well. Let $v_{\max} = \max_{C \subseteq I} v(C)$.
Then every coalitional constraint is trivially satisfied if $\epsilon\geq v_{\max}$, while the constraint associated with the grand coalition can not be satisfied if $\epsilon < 0$. Therefore, we can bound $\epsilon\in[\underline{\epsilon}, \overline{\epsilon}] = [0, v_{\max}]$.
For $\mu$, if only one constraint is violated, then we want $\mu$ to be large enough to force an increase in $\epsilon$. Therefore, for any single violated coalitional constraint associated with $C$, we want $\nabla_{\epsilon} \mathcal{L} = 1 - \mu (d_c / |C|) < 1 - \mu (\gamma / |C|)$ to be strictly less than zero. This implies that if $\mu$ is at least $|C| / \gamma$, then $\epsilon$ will increase in response to any violated constraint. Hence, we set $\overline{\mu} = n / \gamma$ as an upper bound.

We observe that the Lagrangian formulation is convex in the primal variables $(p, \epsilon)$ and concave in the dual variable $\mu$, hence~\eqref{eqn:lagrangian} is typically referred to as a convex-concave saddle-point problem which can be equivalently formulated as a monotone variational inequality problem~\cite{facchinei2003finite}, for which stochastic algorithms exist including extra gradient~\citep{korpelevich1976extragradient} and Stochastic Mirror Prox~\cite{juditsky2011solving}.
Algorithm~\ref{alg:lagrangian_core} provides pseudocode for the process. The key step is the Update function (Algorithm~\ref{alg:base_update}) which requires a $\texttt{Prox}_{x}$ operator and a monotone map operator $F$.\footnote{See~\citep{mokhtari2020unified} for proximal point approaches to saddle point problems.}
In variational inequality formulations $VI(F, \mathcal{X})$, the problem is to find $x^* \in \mathcal{X}$ such that $\langle F(x^*), x - x^* \rangle \ge 0$ for all $x \in \mathcal{X}$, and the map $F: \mathcal{X} \rightarrow \mathbf{R}^{n + 2}$ is typically written as a single vector valued map containing the ``descent'' directions of all variables. For our setting, we can define $F$ as follows:
\begin{align}
    F(x) &=
    \begin{bmatrix}
        \nabla_p \mathcal{L}
        \\ \nabla_{\epsilon} \mathcal{L}
        \\ -\nabla_{\mu} \mathcal{L}
    \end{bmatrix}
    =
    \begin{bmatrix}
        -\mu \Big( \sum_{C \subseteq I} \frac{d_c}{\vert C \vert} c \Big)
        \\ 1 - \mu \Big( \sum_{C \subseteq I} \frac{d_c}{\vert C \vert} \Big)
        \\ -\Big( \sum_{C \subseteq I} \ell_c(p, \epsilon) - \gamma^2 \Big) \label{eqn:vi_map}
    \end{bmatrix}
\end{align}
where $x = [p, \epsilon, \mu] \in \mathcal{X} = \simplex \times [\underline{\epsilon}, \overline{\epsilon}] \times [0, \overline{\mu}]$.
Observe that we  can use sampling of minibatches of coalitions to Monte Carlo estimate the sums in $F(x)$, e.g., $\sum_{C \subseteq I} \ell_c(p, \epsilon) = \frac{2^n}{B} \mathbb{E}_{C_B \sim I} [\sum_{C \in C_B} \ell_c(p, \epsilon)]$.

\begin{lemma}\label{lemma:monotone}
The map $F$ in~\eqref{eqn:vi_map} is monotone, i.e., $\langle F(x) - F(x'), x - x' \rangle \ge 0$ over all $x, x' \in \mathcal{X}$.
\end{lemma}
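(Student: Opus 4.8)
The plan is to prove monotonicity of $F$ directly by expanding the inner product $\langle F(x) - F(x'), x - x' \rangle$ and showing it is nonnegative. The cleanest route is to recognize that $F$ is the natural "saddle-point map" associated with the Lagrangian $\mathcal{L}(p,\epsilon,\mu)$, which is convex in the primal block $(p,\epsilon)$ and concave (in fact linear) in $\mu$. A standard fact is that the map $x = (y, \mu) \mapsto (\nabla_y \mathcal{L}, -\nabla_\mu \mathcal{L})$ is monotone whenever $\mathcal{L}$ is convex-concave; so strictly speaking one could just cite this (e.g.\ \cite{facchinei2003finite}). But since the referee may want the self-contained argument, I would also give the short direct computation.

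First I would write $g(p,\epsilon) := \sum_{C \subseteq I} \ell_c(p,\epsilon)$ and note $\mathcal{L}(p,\epsilon,\mu) = \epsilon + \mu(g(p,\epsilon) - \gamma^2)$. Then with $y = (p,\epsilon)$, $F(x) = (\mu \nabla_y g,\ 1)^{\!\top}$-ish in the primal part and $-\bigl(g(y) - \gamma^2\bigr)$ in the dual part; concretely $\nabla_y \mathcal{L} = (\,0, 1)^{\!\top} + \mu \nabla_y g(y)$ and $-\nabla_\mu \mathcal{L} = -(g(y) - \gamma^2)$. Expanding $\langle F(x) - F(x'), x - x'\rangle$ with $x = (y,\mu)$, $x' = (y',\mu')$, the constant $(0,1)^{\!\top}$ term cancels, and the remaining cross terms organize into
\[
\mu\langle \nabla_y g(y) - \nabla_y g(y'),\, y - y'\rangle
\;+\; (\mu - \mu')\langle \nabla_y g(y'),\, y - y'\rangle
\;-\;(\mu - \mu')\bigl(g(y) - g(y')\bigr).
\]
The first term is $\ge 0$ because each $\ell_c$ — a composition of the affine map $(p,\epsilon)\mapsto v(C) - \epsilon - p^\top c$ with $\tfrac{1}{2|c|}(\max(0,\cdot))^2$, a convex nondecreasing-on-nonnegatives function — is convex in $y$, so $g$ is convex and $\nabla_y g$ is a monotone gradient map. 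The last two terms combine to $(\mu - \mu')\bigl[\langle \nabla_y g(y'), y - y'\rangle - (g(y) - g(y'))\bigr]$, and convexity of $g$ gives $g(y) - g(y') \ge \langle \nabla_y g(y'), y - y'\rangle$; the sign of $(\mu - \mu')$ is uncontrolled, so this bracket could be negative — but that is exactly offset once we also symmetrize. The trick is to instead add and subtract so the dual cross terms pair as $-(\mu-\mu')(g(y)-g(y')) + (\mu-\mu')\langle\nabla_y g(y'),y-y'\rangle$ on one side and the mirror expression with primes swapped; averaging the two equivalent expansions, the $\mu$-coupling terms telescope and what survives is $\tfrac{\mu+\mu'}{2}\langle \nabla_y g(y)-\nabla_y g(y'), y-y'\rangle \ge 0$ (using $\mu,\mu' \ge 0$) plus $\tfrac12(\mu-\mu')\bigl[2\langle\cdots\rangle - \text{(Bregman-type nonneg terms)}\bigr]$ that cancel. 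In short: the bilinear $\mu \cdot g$ coupling contributes a skew-symmetric (hence zero-inner-product) piece plus the monotone primal piece scaled by $\mu \ge 0$.

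The main obstacle is bookkeeping the dual-coupling cross terms correctly: the naive expansion leaves a term whose sign depends on $\mu - \mu'$, and one must see that it is exactly annihilated by the skew-symmetry of the bilinear form $\mu\bigl(g(y)-\gamma^2\bigr)$ — equivalently, invoke the general convex-concave lemma rather than fighting the algebra. I would therefore structure the write-up as: (1) state that $g$ is convex in $y = (p,\epsilon)$ as a sum of convex compositions; (2) decompose $F = F_{\mathrm{lin}} + F_{\mathrm{sym}} + F_{\mathrm{grad}}$ where $F_{\mathrm{lin}}$ is constant (zero contribution), $F_{\mathrm{sym}}$ is the skew-symmetric part coming from the bilinear coupling $\mu \cdot g$ (zero contribution to $\langle F(x)-F(x'),x-x'\rangle$ by antisymmetry), and $F_{\mathrm{grad}} = (\mu\nabla_y g(y),\, 0)$-type whose contribution is $\mu\langle\nabla_y g(y)-\nabla_y g(y'),y-y'\rangle \ge 0$ after the symmetrization — or, most economically, (1') just cite that the saddle map of a convex-concave function on a convex domain is monotone \cite{facchinei2003finite, rockafellar1970monotone}, having already established convex-concavity of $\mathcal{L}$ in the paragraph preceding the lemma. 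I would likely present the citation-based proof as the main argument and relegate the explicit expansion to a remark or appendix.
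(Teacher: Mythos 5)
Your proposal is correct, but it takes a genuinely different route from the paper. The paper proves the lemma by a brute-force expansion tailored to the squared-hinge losses: it computes $F(x)^\top x'$, substitutes $d_c = v(C)-\epsilon-p^\top c$ into the cross terms, and collapses everything into the explicit identity $\langle F(x)-F(x'),x-x'\rangle = \tfrac12\sum_c \tfrac{\mu+\mu'}{|c|}(d_c-d_c')^2 \ge 0$. You instead recognize $F$ as the saddle map of $\mathcal{L}(p,\epsilon,\mu)=\epsilon+\mu\,(g(p,\epsilon)-\gamma^2)$ with $g=\sum_c\ell_c$ convex (each $\ell_c$ is an affine map composed with the convex function $\tfrac{1}{2|c|}\max(0,\cdot)^2$, and it is $C^1$, so the gradients in $F$ are genuine) and $\mathcal{L}$ linear in $\mu$, and you invoke the classical fact that the saddle map of a differentiable convex--concave function is monotone on the domain where convex--concavity holds (here convexity in $(p,\epsilon)$ needs $\mu\ge 0$, which $\mathcal{X}$ guarantees). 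Your route buys generality -- it works verbatim for any convex per-coalition losses, not just the squared hinge -- and it sidesteps the active/inactive-hinge case distinction that the paper's substitution $-c^\top p'-\epsilon' = d_c'-v(C)$ quietly glosses over (that identity only holds when the hinge at $(p',\epsilon')$ is active); in exchange, the paper's explicit computation yields a closed-form expression for the monotonicity gap that it reuses for the boundedness estimates in its appendix. One inaccuracy in your sketch of the direct computation: since $g$ is nonlinear, the coupling $\mu\, g(y)$ does not decompose into a skew-symmetric piece plus a gradient piece, and the leftover cross term after symmetrizing does not cancel; averaging your two expansions actually gives $\langle F(x)-F(x'),x-x'\rangle = \mu\, D_g(y',y)+\mu'\, D_g(y,y')$, where $D_g$ is the Bregman divergence of $g$, which is nonnegative precisely because $\mu,\mu'\ge 0$ and which reduces to the paper's expression when all hinges are active. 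Since your primary argument is the citation to the convex--concave saddle-map result, this slip does not affect correctness, but the direct-computation remark should be stated in the Bregman form rather than via a claimed cancellation.
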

The proof of Lemma~\ref{lemma:monotone} can be found in the Appendix.

\begin{theorem}
Let $v : 2^I \rightarrow \mathbb{R}$ be a characteristic function. Algorithm~\ref{alg:lagrangian_core} reduces the duality gap $\max_{\mu'} \mathcal{L}(p, \epsilon, \mu') - \min_{p', \epsilon'} \mathcal{L}(p', \epsilon', \mu)$ at a rate of $\mathcal{O}(1/\sqrt{T})$.
\end{theorem}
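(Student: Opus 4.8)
The plan is to recognize Algorithm~\ref{alg:lagrangian_core} as a stochastic extragradient / Stochastic Mirror Prox scheme applied to the monotone variational inequality $VI(F, \mathcal{X})$ with $F$ as in~\eqref{eqn:vi_map} and $\mathcal{X} = \simplex \times [\underline{\epsilon}, \overline{\epsilon}] \times [0, \overline{\mu}]$, and then to invoke the classical $\mathcal{O}(1/\sqrt{T})$ guarantee for that method on monotone VIs over compact convex domains~\cite{korpelevich1976extragradient,juditsky2011solving}. Concretely, I would first check the three hypotheses that the guarantee requires. (i) $\mathcal{X}$ is compact and convex: it is the product of a simplex with the two bounded intervals $[\underline{\epsilon}, \overline{\epsilon}] = [0, v_{\max}]$ and $[0, \overline{\mu}] = [0, n/\gamma]$ established above, so its diameter $\Omega$ (measured in the norm underlying the $\texttt{Prox}$ operator, e.g.\ an entropic mirror map on the simplex block and Euclidean distances on the interval blocks) is a finite constant. (ii) $F$ is monotone on $\mathcal{X}$: this is exactly Lemma~\ref{lemma:monotone}. (iii) $F$ is Lipschitz on $\mathcal{X}$ and the minibatch estimator $\hat F$ --- obtained by replacing each exact sum $\sum_{C \subseteq I}(\cdot)$ in~\eqref{eqn:vi_map} with $\frac{2^n}{B}\sum_{C \in C_B}(\cdot)$ for a uniformly drawn batch $C_B$ --- is unbiased with second moment $\mathbb{E}\|\hat F(x)\|_*^2 \le M^2$ bounded uniformly over $\mathcal{X}$.

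For hypothesis (iii) I would argue that each deficit $d_c(p, \epsilon) = \max(0, v(C) - \epsilon - p^\top c)$ is $1$-Lipschitz in $\epsilon$, $\|c\| \le \sqrt{n}$-Lipschitz in $p$, and bounded by $2 v_{\max}$ on $\mathcal{X}$; hence the quadratics $\ell_c = d_c^2 / (2|C|)$ are Lipschitz with constants controlled by $v_{\max}$ and $n$, and $\mu$ ranges over the bounded interval $[0, n/\gamma]$. Since every block of $F$ is a finite combination (sums, and the bilinear products $\mu \cdot d_c$) of these bounded Lipschitz functions over a compact set, $F$ is Lipschitz with some constant $L$; and since $\hat F$ is an average of at most $2^n$ such bounded terms scaled by the fixed factor $2^n/B$, its second moment --- and therefore its variance --- is a finite (albeit $n$-dependent) constant $M^2$. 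Unbiasedness is immediate from uniform sampling.

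With (i)--(iii) in hand, the Stochastic Mirror Prox analysis of~\cite{juditsky2011solving} (equivalently, stochastic extragradient) yields that the averaged iterate $\bar x_T = \frac{1}{T}\sum_{t=1}^T x_t$ returned by Algorithm~\ref{alg:lagrangian_core}, written $\bar x_T = (\bar p_T, \bar \epsilon_T, \bar \mu_T)$, satisfies
\[
  \mathbb{E}\Big[\sup_{x \in \mathcal{X}} \langle F(\bar x_T), \bar x_T - x \rangle\Big] = \mathcal{O}\!\left(\frac{\Omega^2 L + \Omega M}{\sqrt{T}}\right) = \mathcal{O}\!\left(\frac{1}{\sqrt{T}}\right),
\]
i.e.\ the VI dual gap function decays at the claimed rate, with constants depending only on $n$, $v_{\max}$, $\gamma$. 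It then remains to convert this VI gap into the stated duality gap. Because $F$ is the saddle-gradient field of $\mathcal{L}$, which is convex in $(p, \epsilon)$ and concave (indeed linear) in $\mu$, the first-order inequalities for convex/concave functions give, for every $x = (p', \epsilon', \mu') \in \mathcal{X}$,
\[
  \langle F(\bar x_T),\, \bar x_T - x \rangle \;\ge\; \mathcal{L}(\bar p_T, \bar \epsilon_T, \mu') - \mathcal{L}(p', \epsilon', \bar \mu_T).
\]
Taking the supremum over $x \in \mathcal{X}$ bounds $\max_{\mu'} \mathcal{L}(\bar p_T, \bar \epsilon_T, \mu') - \min_{p', \epsilon'} \mathcal{L}(p', \epsilon', \bar \mu_T)$ by the left-hand side, and hence by $\mathcal{O}(1/\sqrt{T})$ in expectation, which is the assertion.

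The main obstacle I anticipate is the bookkeeping in hypothesis (iii): confirming that the Monte Carlo rescaling by $2^n/B$ --- needed so that $\hat F$ is unbiased for the full sums in~\eqref{eqn:vi_map} --- still leaves the second-moment bound $M$ finite (it is, being a fixed constant once $n$ is fixed), and tracking that the resulting $\mathcal{O}(1/\sqrt{T})$ constant, although dimension-free in $T$, is exponential in $n$. A secondary point is to make explicit that the precise averaging/output rule of Algorithm~\ref{alg:lagrangian_core} and the $\texttt{Prox}$/map choices in the Update routine (Algorithm~\ref{alg:base_update}) --- a product prox built from an entropic mirror map on $\simplex$ and Euclidean proxes on the interval factors --- satisfy the hypotheses of the cited Mirror Prox theorem; this is standard but worth stating.
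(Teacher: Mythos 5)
Your proposal is correct and follows essentially the same route as the paper: the paper's proof is a direct appeal to the Stochastic Mirror Prox rates of~\cite{juditsky2011solving}, citing monotonicity of $F$ (Lemma~\ref{lemma:monotone}) and the boundedness of the map's norm and variance argued in the text. You simply make explicit the hypotheses the paper treats as ``already argued'' (compactness of $\mathcal{X}$, unbiasedness and bounded second moment of the minibatch estimator, Lipschitzness) and spell out the standard conversion from the VI gap to the saddle-point duality gap, including the fair observation that the $2^n/B$ rescaling makes the hidden constant exponential in $n$.
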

\begin{proof}
We can appeal directly to convergence rates of mirror prox algorithms for monotone variational inequalities~\citep{juditsky2011solving} given we have already argued above that the norm of the map $F$ and its variance are both finite and the map $F$ is monotone (Lemma~\ref{lemma:monotone}).
\end{proof}

In experiments, we use a tailored $\texttt{Prox}$ operator, specifically,
\begin{align}
    \texttt{Prox}_{x}(\eta F(y)) &=
    \begin{bmatrix}
        \texttt{softmax}(\log(p) - \eta F_p(y)), \\
        \texttt{clip}(\epsilon - \eta F_{\epsilon}(y), \underline{\epsilon}, \overline{\epsilon}), \\
        \texttt{clip}(\mu - \eta F_{\mu}(y), 0, \overline{\mu})
    \end{bmatrix}
\end{align}
where $F_z(y)$ retrieves the part of the vector output $F(y)$ corresponding to the variable $y$ and $\texttt{softmax}(s) = [ \frac{e^{s_1}}{\sum_j e^{s_j}}, \ldots, \frac{e^{s_n}}{\sum_j e^{s_j}}]$.

Instructions for accelerating our proposed algorithms on GPUs/TPUs are in Appendix~\ref{app:accelerators}. Hyperparameters including, e.g., learning rate schedules, used in experiments are found in Appendix~\ref{app:alg_hyps}.

\begin{algorithm}[t]
\caption{\texttt{Update}}\label{alg:base_update}
\begin{algorithmic}
\REQUIRE Initial iterate $x$, map evaluation iterate $y$, batch size $B$, step size $\eta$, $\texttt{Prox}$ operator
\STATE Sample batch $C_B$ containing $B$ coalitions
\STATE Compute $F(y)$ using batch of coalitions
\STATE $x' \leftarrow \texttt{Prox}_{x}(\eta F(y))$
\ENSURE $x'$
\end{algorithmic}
\end{algorithm}

\begin{algorithm}[t]
\caption{Least-Core via Mirror Prox (Core Lagrangian)}\label{alg:lagrangian_core}
\begin{algorithmic}
\REQUIRE Number of iterations $T$, batch size $B$, step size schedule $\eta_t$
\STATE $p_0 = \frac{1}{n} \mathbf{1}_n$
\STATE $\epsilon_0 = \overline{\epsilon}$
\STATE $\mu_0 = \overline{\mu}$
\STATE $x_0 = [p_0, \epsilon_0, \mu_0]$
\FOR{$t = 1 \le T$}
\STATE $x' \leftarrow \texttt{Update}(x_{t-1}, x_{t-1}, B, \eta)$
\STATE $x_t \leftarrow \texttt{Update}(x_{t-1}, x', B, \eta)$
\ENDFOR
\STATE Compute weighted average of imputations: $p^* = \frac{\eta_t p_t}{\sum_{t'} \eta_{t'}}$.
\ENSURE $p^*$
\end{algorithmic}
\end{algorithm}

\section{Empirical Analysis}
\label{section:empirical_analysis}

We tested our algorithm, Core Lagrangian (CL) (Alg.~\ref{alg:lagrangian_core}), on a range of cooperative games and report our findings here. In our first set of experiments we compared the performance of CL to that of a recent state-of-the-art algorithm for approximating the least-core which relies on solving the LP using sampled coalitions~\cite{YanProcaccia21} (Section~\ref{sect:timing}). 
We then use our algorithm as a tool  to study the stability properties of a number of prominent cooperative-game classes, illustrating the benefits of having algorithms for approximating the core for large games (Section~\ref{sec:stability_compact_game_repr}). Finally, we examine applications of our algorithms for explainable AI (XAI) purposes (Section~\ref{sect:xai_core_empirical}).

\subsection{Timing Sampled LPs vs. Core Lagrangian}
\label{sect:timing}

\begin{figure*}[b]
\centering

    \begin{subfigure}[t]{0.19\textwidth}
        \centering
        \begin{tikzpicture}
            \begin{axis}[
                xlabel=$\sigma$,
                ylabel=$\xi$,
                ylabel near ticks,
                xlabel near ticks,
                xtick={ 0,7,14 },
                xticklabels={ 0.01,0.15,0.30 },
                ytick={ 0,7,14 },
                yticklabels={ 0.48,0.50,0.52 },
                enlargelimits=false,
                axis on top,
                colorbar,
                colorbar style={
                    xlabel near ticks,
                    yticklabel style={
                        /pgf/number format/.cd,
                        fixed,
                        precision=2,
                        fixed zerofill,
                    },
                    width=2mm,
                    at={(1.1,1)},
                },
                width=0.9\linewidth,
                height=0.9\linewidth,
            ]
                \addplot [matrix plot*,point meta=explicit] file [] {data/sweep_wvg_gauss.dat};
            \end{axis}
        \end{tikzpicture}
        \captionsetup{justification=centering}
        \caption{Gaussian ($\mu=1.0$)}\label{fig:wvg:gauss}
    \end{subfigure}\hfill%
    \begin{subfigure}[t]{0.19\textwidth}
        \centering
        \begin{tikzpicture}
            \begin{axis}[
                xlabel=$\lambda$,
                ylabel=$\xi$,
                ylabel near ticks,
                xlabel near ticks,
                xtick={ 0,7,14 },
                xticklabels={ 0.25,1.38,2.50 },
                ytick={ 0,7,14 },
                yticklabels={ 0.30,0.50,0.70 },
                enlargelimits=false,
                axis on top,
                colorbar,
                colorbar style={
                    xlabel near ticks,
                    yticklabel style={
                        /pgf/number format/.cd,
                        fixed,
                        precision=1,
                        fixed zerofill,
                    },
                    width=2mm,
                    at={(1.1,1)},
                },
                width=0.9\linewidth,
                height=0.9\linewidth,
            ]
                \addplot [matrix plot*,point meta=explicit] file [] {data/sweep_wvg_exp.dat};
            \end{axis}
        \end{tikzpicture}
        \captionsetup{justification=centering}
        \caption{Exponential}\label{fig:wvg:exp}
    \end{subfigure}\hfill%
    \begin{subfigure}[t]{0.19\textwidth}
        \centering
        \begin{tikzpicture}
            \begin{axis}[
                xlabel=$\alpha$,
                ylabel=$\beta$,
                ylabel near ticks,
                xlabel near ticks,
                xtick={ 0,7,14 },
                xticklabels={ 0.10,1.05,2.00 },
                ytick={ 0,7,14 },
                yticklabels={ 0.10,1.05,2.00 },
                enlargelimits=false,
                axis on top,
                colorbar,
                colorbar style={
                    xlabel near ticks,
                    yticklabel style={
                        /pgf/number format/.cd,
                        fixed,
                        precision=1,
                        fixed zerofill,
                    },
                    width=2mm,
                    at={(1.1,1)},
                },
                width=0.9\linewidth,
                height=0.9\linewidth,
            ]
                \addplot [matrix plot*,point meta=explicit] file [] {data/sweep_wvg_beta.dat};
            \end{axis}
        \end{tikzpicture}
        \captionsetup{justification=centering}
        \caption{Beta ($\xi = 0.25$)}\label{fig:wvg:beta}
    \end{subfigure}\hfill%
    \begin{subfigure}[t]{0.19\textwidth}
        \centering
        \begin{tikzpicture}
            \begin{axis}[
                xlabel=$\sigma$,
                ylabel=$p$,
                ylabel near ticks,
                xlabel near ticks,
                xtick={0,10,20},
                xticklabels={1.0, 2.0, 3.0},
                ytick={0,4,8,12,16,20},
                yticklabels={0.0,0.2,0.4,0.6,0.8,1.0},
                enlargelimits=false,
                axis on top,
                colorbar,
                colorbar style={
                    xlabel near ticks,
                    yticklabel style={
                        /pgf/number format/.cd,
                        fixed,
                        precision=1,
                        fixed zerofill,
                    },
                    width=2mm,
                    at={(1.1,1)},
                },
                width=0.9\linewidth,
                height=0.9\linewidth,
            ]
                \addplot [matrix plot*,point meta=explicit] file [] {data/sweep_erdos_renyi.dat};
            \end{axis}
        \end{tikzpicture}
        \captionsetup{justification=centering}
        \caption{Erdős-Rényi\\~}\label{fig:wgg:erdosrenyi}
    \end{subfigure}\hfill%
    \begin{subfigure}[t]{0.19\textwidth}
        \centering
        \begin{tikzpicture}
            \begin{axis}[
                xlabel=$p$,
                ylabel=$k$,
                ylabel near ticks,
                xlabel near ticks,
                xtick={0,10,20},
                xticklabels={0.0,0.5,1.0},
                ytick={0,4,8,12,16,20},
                yticklabels={4,8,12,16,20,24},
                enlargelimits=false,
                axis on top,
                colorbar,
                colorbar style={
                    xlabel near ticks,
                    yticklabel style={
                        /pgf/number format/.cd,
                        fixed,
                        precision=1,
                        fixed zerofill,
                    },
                    width=2mm,
                    at={(1.1,1)},
                },
                width=0.9\linewidth,
                height=0.9\linewidth,
            ]
                \addplot [matrix plot*,point meta=explicit] file [] {data/sweep_newman_watts.dat};
            \end{axis}
        \end{tikzpicture}
        \captionsetup{justification=centering}
        \caption{Newman Watts Strogatz\\($\sigma=1$)}\label{fig:wgg:nws}
    \end{subfigure}

\caption{(\subref{fig:wvg:gauss})-(\subref{fig:wvg:beta}) Heatmaps illustrating the impact of parameter axes on the least-core value (in color) in three weighted voting games. (\subref{fig:wgg:erdosrenyi})-(\subref{fig:wgg:nws}) Mean least-core value for Erdős-Rényi and Newman Watts Strogatz when sweeping over two hyperparameters. Constant hyperparameters are shown in parenthesis.}
\label{fig:games:lcv}
\end{figure*}
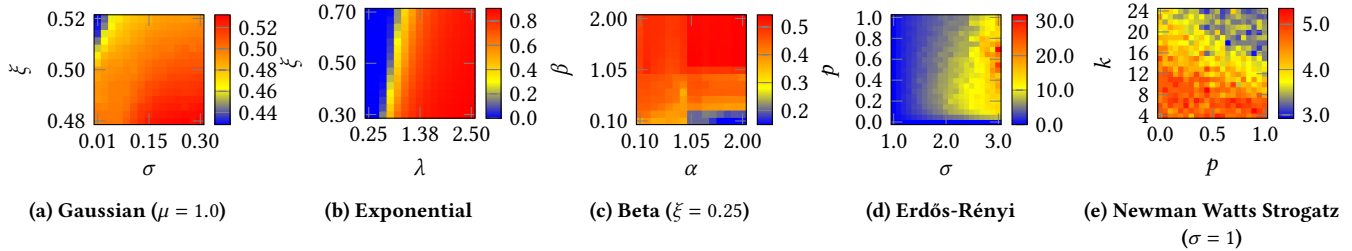

To test how efficient CL is in practice, we evaluate its performance across very large ($n=100$) instances of weighted voting games. 
\begin{definition} (\cite[Definition 4.1]{Chalkiadakis12Computational}) A weighted voting game $G$ is a tuple $(I, \bw, q)$ where $I = \{1, 2, \cdots, n\}$ is a set of agents, $\bw = (w_1, w_2, \cdots, w_n) \in \Re^n$ is a vector of weights (one per player), and $q \in \Re$ is a quota. The characteristic function is
\[ v(C) = \left\{ \begin{array}{ll}
         1 & \mbox{if $\sum_{c \in C} w_c \geq q$};\\
         0 & \mbox{otherwise}.\end{array} \right. \] 
\end{definition}

\begin{figure}[t]
\centering
\includegraphics[width=0.40\textwidth]{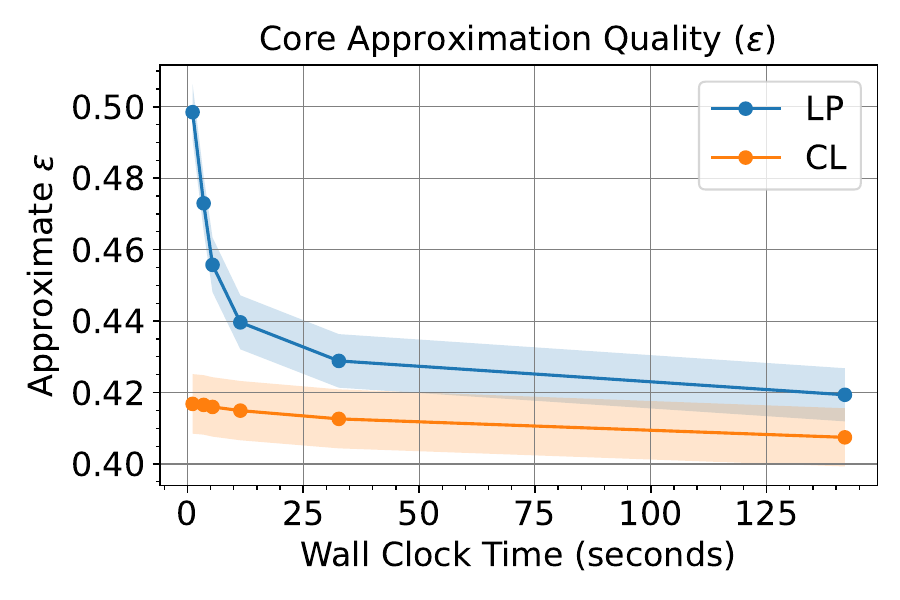}
\caption{Approximate $\epsilon$ of the linear program (LP) versus the core Lagrangian (CL) method as a function of computation time (seconds). 
The $x$-axis corresponds to wall-clock time taken by both algorithms run side-by-side given $k \in \{500, 1000, 2000, 4000, 8000, 16000\}$ sampled coalitions for the LP method, $t_k$. A better core approximation quality is reflected in having a lower $\epsilon$ for the same runtime. The $y$-axis represents the approximate $\epsilon$ of the least-core solution found, $\hat{\epsilon}(p_{LP}, \hat{C})$ and  $\hat{\epsilon}(p_{CL}, \hat{C})$, computed over the same set of $50,000$ coalitions, $\hat{C}$.
Each data point $(t_k, \hat{\epsilon}(p_x, \hat{C}))$ represents an average over the same set of $2,500$ random weighted voting games with shading indicating standard error of the mean.
\label{fig:timings-wvgs}}
\end{figure}

Weighted voting games are well studied in the literature (\eg~\cite{taylor1992characterization,elkind2009computational,zuckerman2012manipulating,aziz2011false,see2014cost,rey2010complexity}) and can model real-world scenarios such as  the European Union voting system~\cite{RaunioWiberg98,bilbao2002voting}, or multiagent resource allocation problems where an agent's weight correspond to resources available for that agent, and the threshold is the total amount of pooled resources required to accomplish some task of interest.

In our experiments, each agent $i$ has a weight $w_i \in \{1, 2, \cdots, 100\}$ drawn uniformly at random and we set the quota $q = \xi n \bE[w_i]$, where $\xi$ is the proportional threshold (fraction of the expected total weight). We use proportional thresholds $\xi$ sampled uniformly at random in the range $[0.1, 0.9]$.
There are $2^{100}$ coalitions and thus $2^{100}$ constraints in the LP formulation of the least core (Constraint~\ref{eq:coalition-constraints}), so exactly solving for the least-core is computationally infeasible.

We compared our method to Yan and Procaccia's recent algorithm for approximating the least core~\cite{YanProcaccia21}.  Their algorithm samples a subset of coalitions and returns the LP solution based on constraints built only from the sampled subset. With high probability, this   method is guaranteed to reduce the approximation error as the number of coalitions sampled increases~\cite[Theorem 1]{YanProcaccia21}. 

It is difficult to compare iterations of CL to the number of sampled coalitions used in LP; instead, we chose a number of sampled coalitions $k$ and recorded the wall clock time taken by the LP method for each value of $k$, say $t_k$ seconds. Then, we let CL run for $t_k$ seconds and retrieved the solution after that amount of elapsed time. The LP method used CVXPY~\cite{diamond2016cvxpy} while our CL code used JAX and optax~\cite{deepmind2020jax}. For each $t_k$, both methods returned an  imputation, $p_{LP}$ and $p_{CL}$, respectively. For each imputation we computed its respective (average) $\epsilon$ value. 
Due to the size of the game the error of the solution cannot be computed exactly, so we approximated the value of $\epsilon$ by sampling a set of $50,000$ coalitions, $\tilde{C}$, and computing  $\hat{\epsilon}(p, \hat{C}) = \max_{c \in \hat{C}}(v(c) - p^\top c)$. Figure~\ref{fig:timings-wvgs} summarizes our findings. In particular, we observed a significant improvement in the core approximation when using our method (CL) as opposed to the LP-based method. We did observe, however, that the threshold affects the degree to which CL outperforms the LP-based method. In particular, when the threshold was close to 50\% of the total expected weight, the LP-based method would sometimes outperform CL.

\subsection{Analyzing Stability in Compact Game Representations}
\label{sec:stability_compact_game_repr}

After validating that our algorithm can effectively solve large cooperative games,  showing improvements over a strong baseline, we further illustrate its value  by making a rigorous study across prominent classes of cooperative games to better understand how different game-features affect stability.

\subsubsection{Weighted Voting Games}
\label{sec:wvgs}
In our first set of experiments we studied how stability was impacted by different parameterizations of weighted voting games, defined in Section~\ref{sect:timing}. Since we were running multiple experiments, we set the number of agents to be $n=15$, but drew agents' weights from different distributions and varied the proportional quota, $\xi$. Figure~\ref{fig:games:lcv} presents our results. We sample 10,000 games for each parameter configuration, and examine the average least-core value in these games. 
We first note that regions that have a low least-core value (blue) are stable or very close to being stable, whereas  high $\epsilon$-core indicate regions of instability.  Second, we observe that the threshold greatly affected the stability of an instance. If the threshold was low, many coalitions form and achieve the maximum coalitional value of 1. If the threshold was very high, in expectation the only successful coalition was the grand coalition, reducing the likelihood of blocking coalitions. 

In Figure~\ref{fig:games:lcv}(a) we present results where the agents' weights were generated by a Gaussian distribution with $\mu=1.0$ and $\sigma\in[0.01,0.3]$. We observe that for a fixed quota, high weight-variance leads to less stable games. Figure~\ref{fig:games:lcv}(b) presents results when agents' weights were drawn from an exponential distribution with $\lambda\in[0.25, 2.50]$. We observe that as $\lambda$ increases we find less stable games. We hypothesize that since  agent weights are more concentrated around low values, possible successful coalitions often share agents, opening up the possibility of the formation of blocking coalitions.
Finally, in Figure~\ref{fig:games:lcv}(c) we present results where we used a Beta distribution with parameters $(\alpha,\beta)$. We notice a difference in stability of games when the parameters are either less than 1.0 or greater than 1.0. If both $\alpha$ and $\beta$ are high, games are less stable. However, there is a region of stability when $\alpha>1.0$ and $\beta<1.0$

\subsubsection{Graph Games}
\label{sec:graph-games}
In many cooperative games, relationships between agents are modelled via a graph $G=\langle V,E\rangle$ (\eg~\cite{myerson77graphs,deng_papa_1994,bachrach2007computing,demange2004group,resnick2009cost,bachrach2010path,branzei2011social,rey2011bribery}). In induced subgraph games~\cite{deng_papa_1994}, the agents are  represented by vertices of the graph and the edge weight, $w(e)$ for $e=(i,j)$, indicates the value that agents $i$ and $j$ accrue from being in the same coalition. An absence of an edge implies there is no interaction between a pair of agents and thus, no loss or benefit from being in the same coalition. Given  graph $G$, the characteristic function for the induced cooperative game is $v(C)=\sum_{e\in\{(i,j)\in E|i,j\in C\}} w(e)$.

We conduct an empirical study to better understand how the underlying graph structure influences the stability of the game. To this end, we select a collection of well-known parameterized random graph models, each with different properties, and generate 80 cooperative graph games for each parameterization, measuring their stability by computing their least-core values via our algorithm CL. We set $n=32$, and generate edge weights with a Gaussian distribution with variance $\sigma = 1.0$ (unless specified otherwise) and positive mean $\mu$ chosen such that 60\% of edge weights are positive in expectation.\footnote{If graph games have only positive edge weights, then the core is non-empty, i.e. the least-core value is 0.0~\cite{deng_papa_1994}.} All graphs were generated using the NetworkX library \citep{hagberg2008networkx}. We present results on two graph-classes, Erdős-Rényi~\cite{erdHos2013spectral} and  Newman Watts Strogatz~\cite{newman1999graph}, and discuss four additional models in Appendix~\ref{app:graph-games}.

In the Erdős-Rényi graph model there is a single parameter $p$, which indicates the probability that for any pair of vertices, $x,y$, there is an edge connecting $x$ and $y$. In our experiments we varied $p$ from 0.0 to 1.0, and varied the variance $\sigma$ of the weight distribution from 1.0 to 3.0. Our results are shown in Figure~\ref{fig:wgg:erdosrenyi}. We observe that the stability of the games generated depends on both $p$ and $\sigma$. In particular, games with high weight variance for edge weights, and high uncertainty as to whether an edge would form between any pair of vertices (\ie for $p\in[0.4,0.7])$ are less stable.

The Newman Watts Strogatz model generates graphs with the small-world property and consists of two parameters. An instance of a graph is initialized as a ring and connected with its $s\lfloor \frac{k}{2}\rfloor$ nearest neighbours. Additional edges in the graph are added with probability $p$.  Figure~\ref{fig:wgg:nws} presents our results as we varied $k$ from 4 to 24, and $p\in[0.0,1.0]$. The variance of the edge-weight distribution was fixed at $\sigma=1.0$.  In particular we observe that both parameters $p$ and $k$ positively correlate with stability in these graphs.

More generally, these results (along with further results in Appendix~\ref{app:graph-games}) show the interplay between the parameters of random-graph generators and  the stability of the induced cooperative game. We believe that further research in this area is warranted, but also emphasize that such empirical analysis can only be carried out  when using algorithms for tractably solving cooperative games at scale, such as the algorithms we have proposed.

\begin{figure*}[ht!]
    \begin{minipage}[c]{0.330\textwidth}
    \centering
    \begin{subfigure}[t]{\textwidth}
        \centering
        \caption{Boston Housing (global / local)}
        \includegraphics[width=0.48\textwidth]{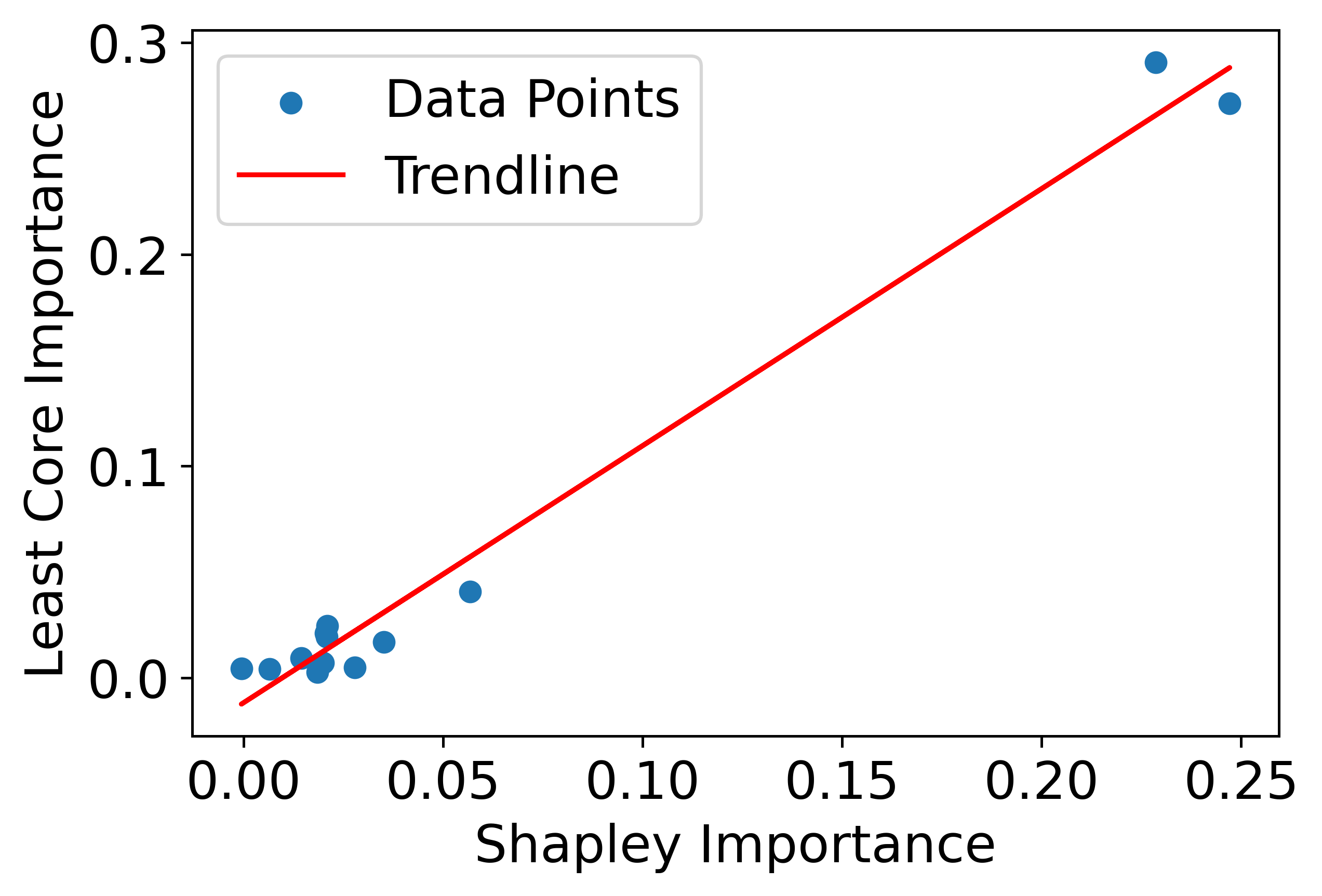}
        \includegraphics[width=0.48\textwidth]{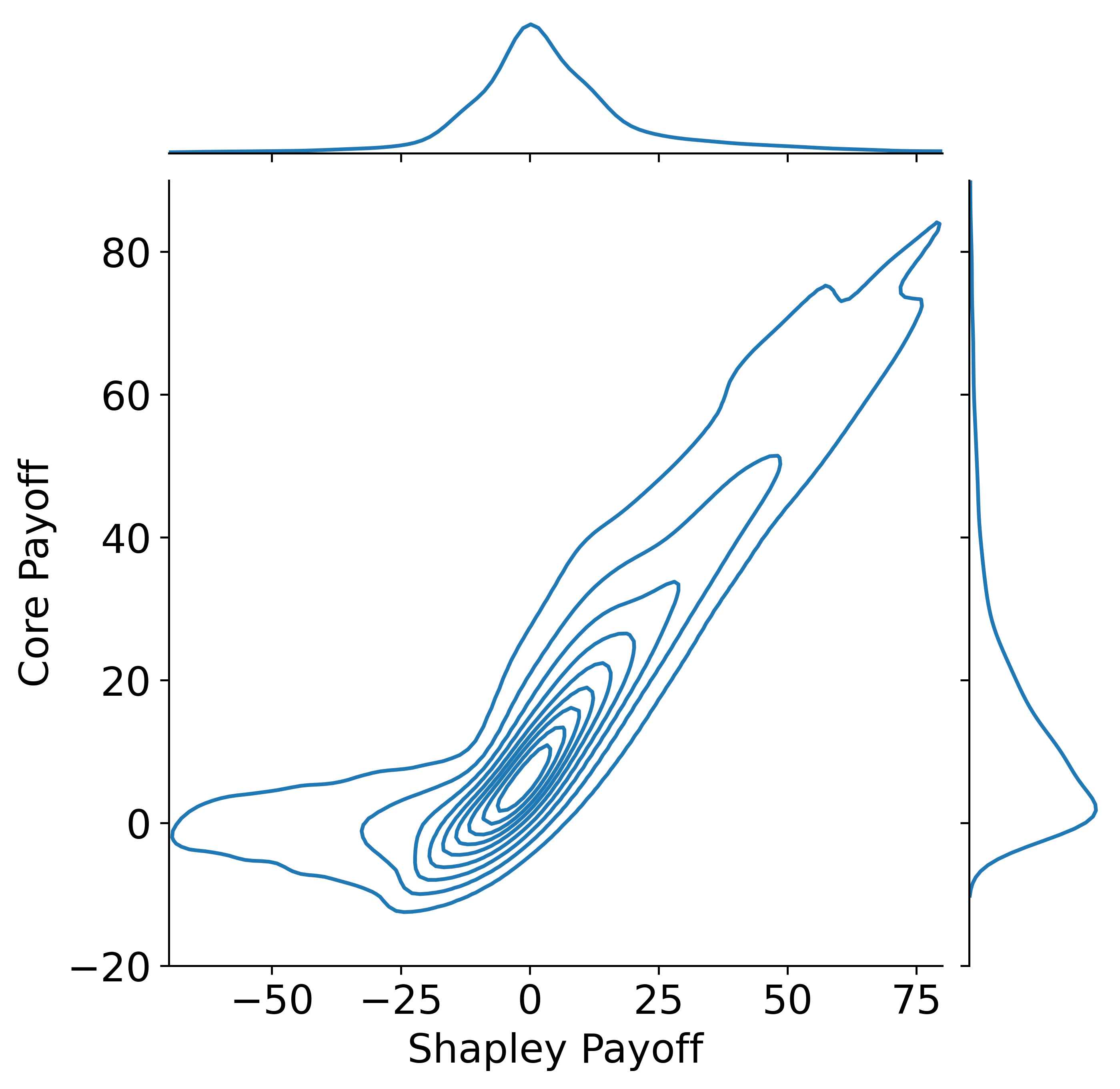}
        \label{fig:xai:boston}
    \end{subfigure}\hfill
    \end{minipage}
    \begin{minipage}[c]{0.330\textwidth}
    \centering
    \begin{subfigure}[t]{\textwidth}
        \centering
        \caption{Diabetes (global / local)}
        \includegraphics[width=0.48\textwidth]{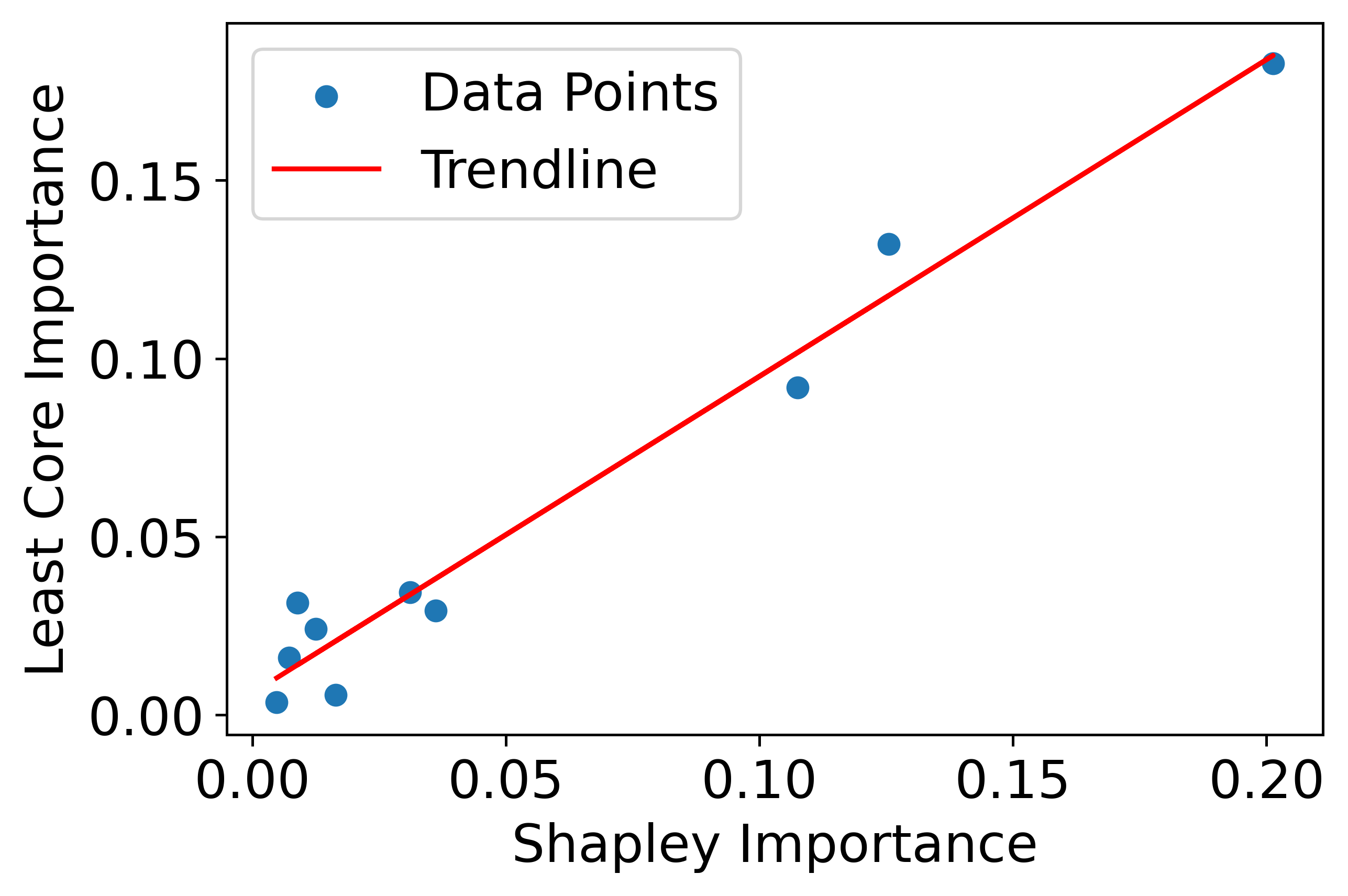}
        \includegraphics[width=0.48\textwidth]{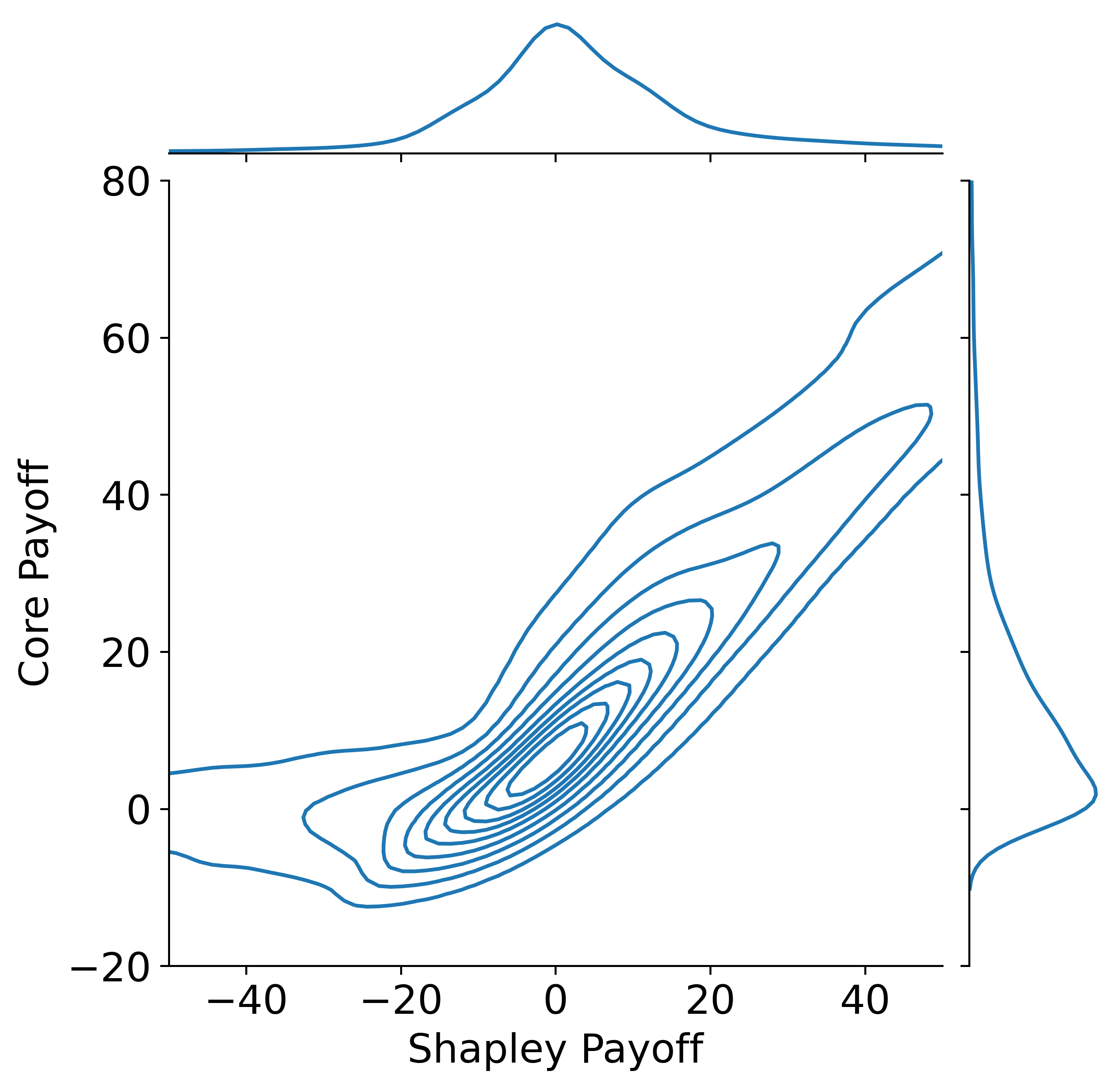}
        \label{fig:xai:diabetes}
    \end{subfigure}\hfill
    \end{minipage}
    \begin{minipage}[c]{0.330\textwidth}
    \centering
    \begin{subfigure}[t]{\textwidth}
        \centering
        \caption{Breast Cancer (global / local)}
        \includegraphics[width=0.48\textwidth]{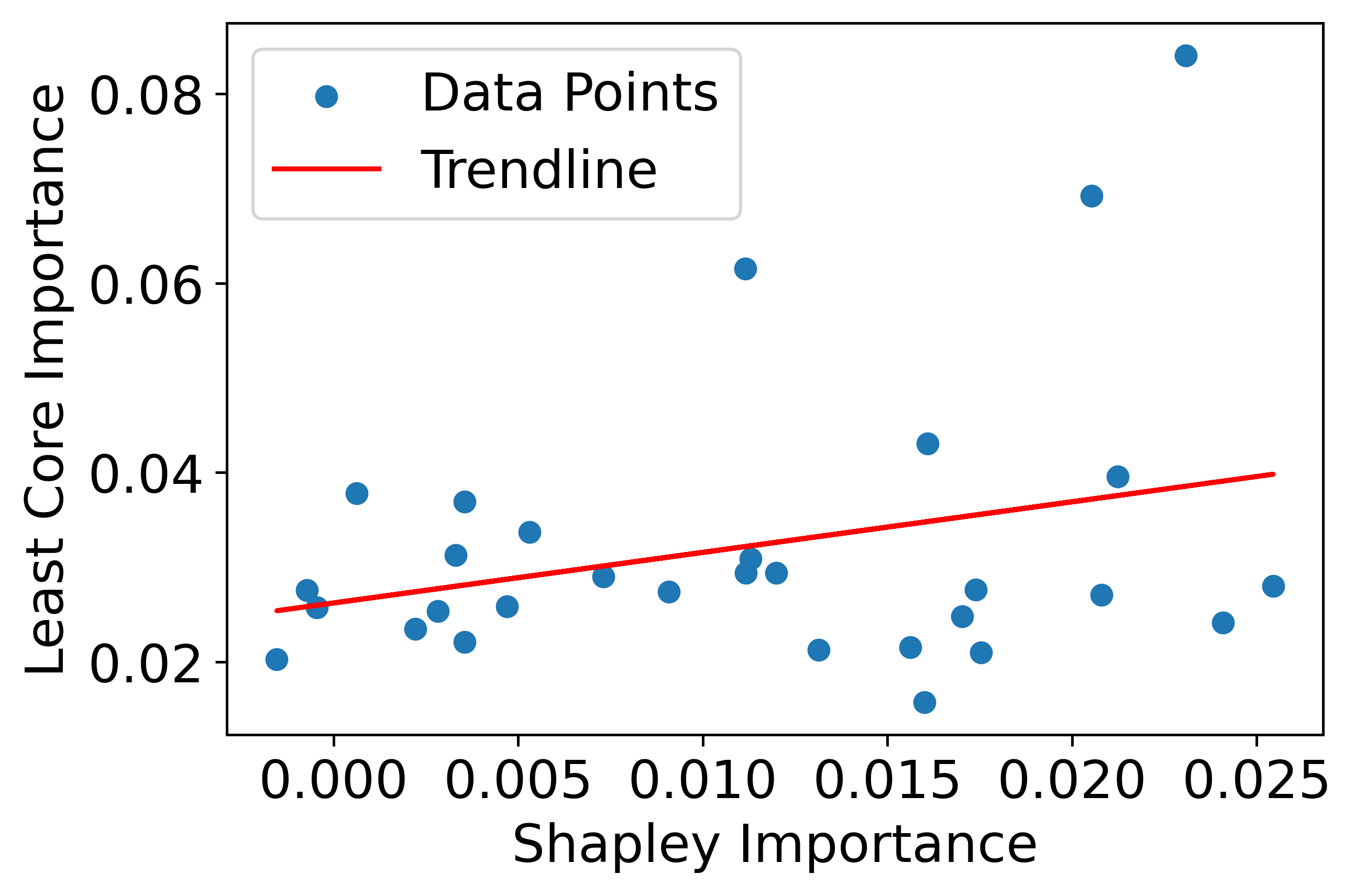}
        \includegraphics[width=0.48\textwidth]{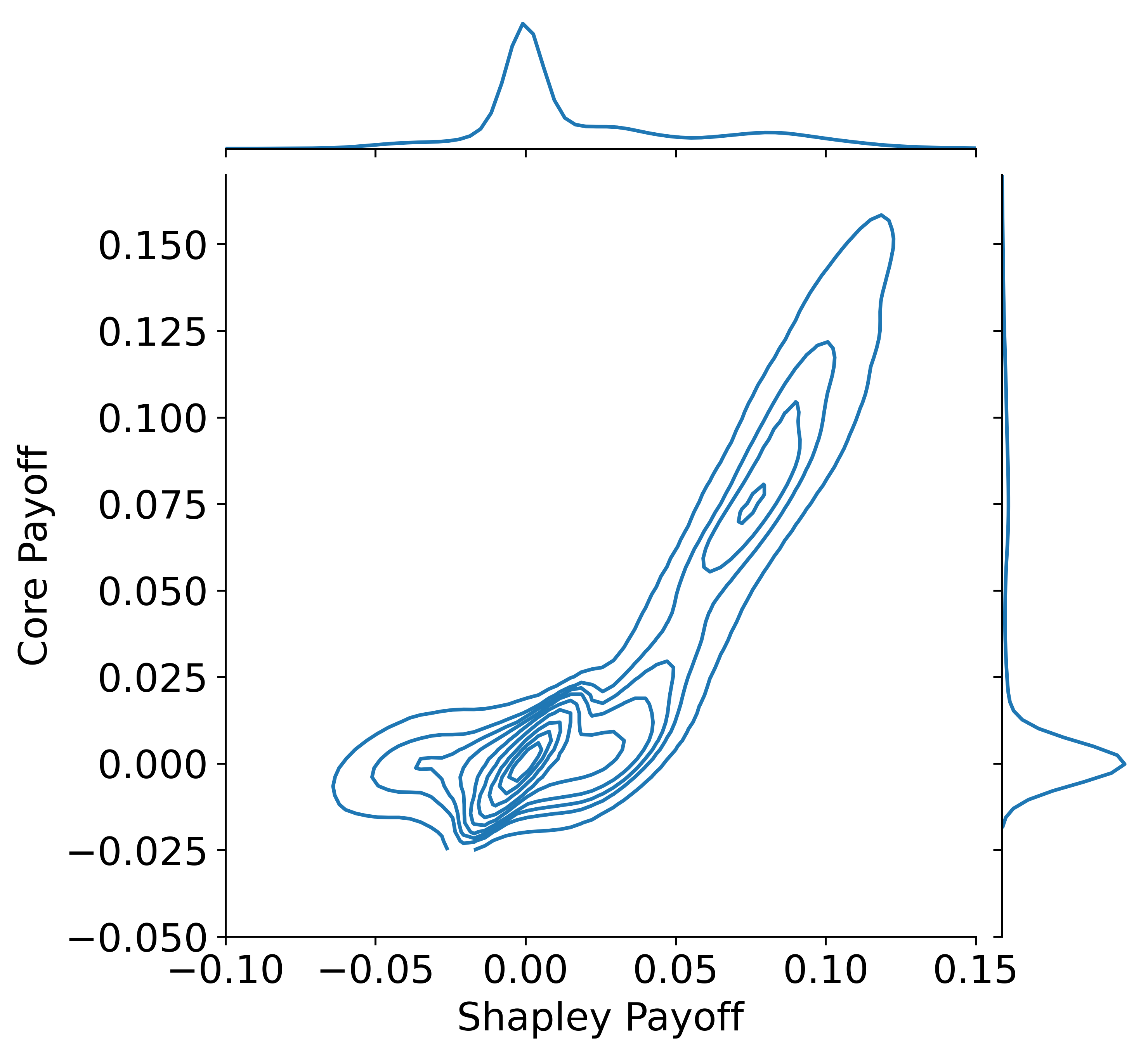} 
        \label{fig:xai:cancer}
    \end{subfigure}
    \end{minipage}
    \caption{Correlation of Shapley and Core importance measures of features at both global dataset and individual instance levels.}
    \label{fig:xai}
\end{figure*}

\begin{figure*}[ht!]
    \begin{subfigure}[t]{0.333\textwidth}
        \centering
        \caption{Boston Housing}
        \includegraphics[width=0.8\textwidth]{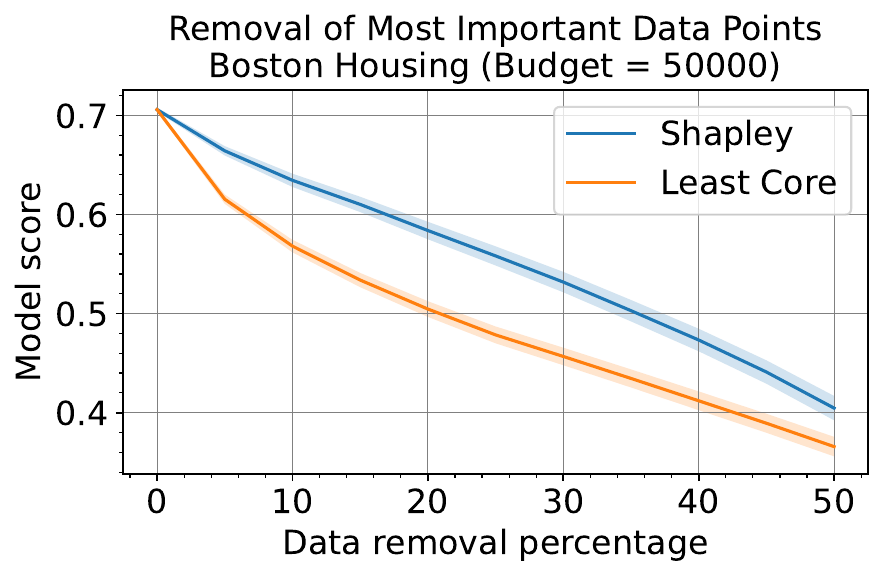}
        \label{fig:xai_data_valuation_boston_housing}
    \end{subfigure}\hfill
    \begin{subfigure}[t]{0.333\textwidth}
        \centering
        \caption{Breast Cancer}
        \includegraphics[width=0.8\textwidth]{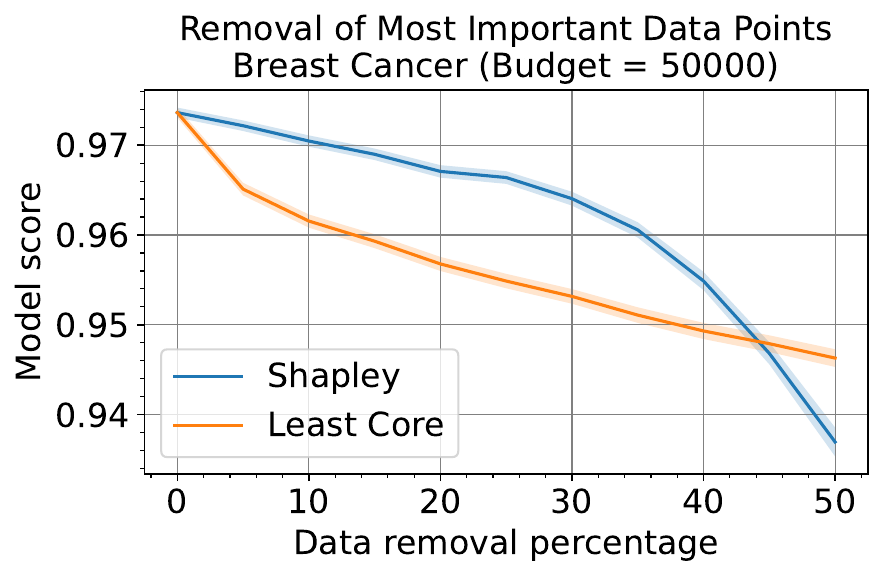} 
        \label{fig:xai_data_valuation_breast_cancer}
    \end{subfigure}\hfill
    \begin{subfigure}[t]{0.333\textwidth}
        \centering
        \caption{Chat Bot Arena}
        \includegraphics[width=0.8\textwidth]{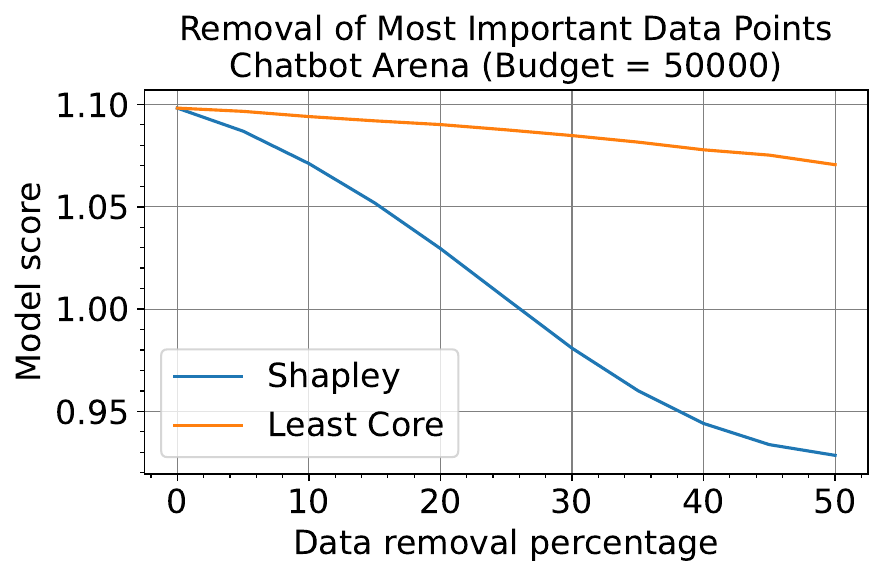}\\
        \label{fig:xai_models_appendix:arena}
    \end{subfigure}
    \caption{Data Valuation on the Boston Housing, Diabetes, and Chat Bot Arena. Error bars correspond to 95\% confidence intervals (1000 repeats).}
    \label{fig:data_valuation}
\end{figure*}

\section{Explainable AI}
\label{sect:xai_core_empirical}

Cooperative game theory has played an important role in the development of explainable AI (XAI) through the application of the Shapley value to feature-attribution and data-valuation problems~\cite{datta2016algorithmic,lundberg2017unified,antwarg2019explaining,arrieta2020explainable}. However researchers have also identified disadvantages of Shapley-based analysis, such as generating counter-intuitive explanations in various cases~\cite{sundararajan2020many}. A recent paper argued that the core may provide an alternative to the Shapley value for XAI applications. In this section we explore this idea, applying our algorithm, CL, for computing the least-core on three different real-world datasets: Boston Housing dataset~\cite{harrison1978hedonic} (price regression), Diabetes dataset~\cite{efron2004least} (classification) and the Wisconsin Breast Cancer dataset~\cite{misc_breast_cancer_wisconsin_(diagnostic)_17} (classification).  Our goal is to better understand how the Shapley value and the core are similar or different in their relative assessment of the  impact of different features or data points have on the quality of trained models, specifically, scikit-learn's default linear or logistic regression models~\citep{scikit-learn}.

\subsection{Global Explainability} 
In the global explainability problem, a full dataset is used to determine which features have high impact on the quality of the trained model. We formulate this as a cooperative game by defining a coalition, $C$, to be a subset of features and $v(C)$ to be the quality of the model trained only on features in $C$. Quality may be measured as the accuracy in classification tasks, or as the coefficient of determination $R^2$ for regression tasks. Computing the Shapley value or the least-core for this game returns individual values, one per each feature, which are can be interpreted as a measure of feature-importance.

Figure~\ref{fig:xai} presents a scatter plot showing the correlation between the feature-importance measures returned by applying the Shapley value and those produced by the least core (using our algorithm). Every point represents a single feature, with the $x$-axis reflecting its importance according to the Shapley value and the $y$-axis reflecting its importance according to the least core. 
The two importance measures are correlated, but reflect a different order of feature importance. Interestingly, the correlation is  high when there are a few dominant features (Figures~\ref{fig:xai:boston},\ref{fig:xai:diabetes}), but not as strong in the case where the predictions are driven by many features where none of the features is dominant (Figure~\ref{fig:xai:cancer}).

\subsection{Local Explainability}

In the local explainability problem, the focus is on the prediction a trained model makes on an individual instance. It seeks to determine how the values of individual features increase or decrease the model's prediction for some instance. 
We define a cooperative game where a coalition, $C$, is a set of features  and the characteristic function is the model's prediction on a new instance, created by taking some original instance from the dataset, fixing the values of all features in $C$, and replacing the values of features not in $C$ by randomly sampling other instances.\footnote{In our experiments with the least core, we use the version where non-coalition features are taken from a random instance selected from the 10\% of the instances with the lowest prediction.} Given a dataset containing $d$ instances, such an analysis results in $f$ feature importance measures per each instance, resulting in $d\cdot f$ feature-importance measures, illustrating the importance of having scalable solutions for computing both the Shapley value and least-core.

Figure~\ref{fig:xai} presents results showing the correlation between  Shapley feature-importance and least-core measures, over all the instances in each dataset. We present the data using contour plots due to the sheer number of data-points. Similar to our findings for the global explainability problem, we observe positive correlation between least-core and Shapley value. Note that the least core
returns non-negative payoffs, which explains the bend in the trend at the origin.

\subsection{Data Valuation}
\label{sec:empirical_xai:data_valuation}

In the data valuation problem, the importance of each \emph{data point} in the training set is measured. This problem can be formulated as a cooperative game by defining a coalition, $C$, to be a set of data points, with  characteristic function, $v(C)$, being the quality of the model trained only on data points in $C$. The core (or least core) is a particularly compelling solution for the data valuation problem as imputations in the core can be interpreted as prices that must be paid to data providers so as to ensure the data is available (\eg in the same way that core-pricing is used in package-auctions~\cite{day2008core,DayCramton12}).

We conducted a series of data-valuation studies across different data sets. We used a similar methodolgy as Yan and Procaccia~\cite{YanProcaccia21}: we imposed a  budget of $50,000$ calls to the characteristic function, limiting the number of permutations sampled for the Shapley value approximation and iterations for our least-core algorithm, CL. We then sorted the data by importance according to the Shapley value and least-core, and retrained the models by removing the most important data in blocks of 5\% at a time.

Results  are shown in Figure~\ref{fig:data_valuation}. In particular, these show data points deemed most important by the least-core are more critical to model performance than those selected according to Shapley value. These findings are consistent with previous literature~\cite{YanProcaccia21}. However, we also applied this process to other data sets including an evaluation problem for large language models (LLMs), presented in Appendix~\ref{app:data_valuation}. There, the results are more nuanced in that there exist scenarios where a Shapley value approach is better at identifying key data points. These results open up new research questions around characterizations as to when Shapley-based or core-based data valuation is more appropriate.

\subsubsection{Elo Ratings on Chatbot Arena data set}

We now describe another data valuation experiment that differs from the others in that it is not a traditional regression nor classification task.
The Chatbot Arena data set is composed of humans rating the quality of answers to questions to 20 different large language models (LLMs)~\cite{zheng2023judging}.
Each data point consists of a question and two answers: each answer generated by two different LLMs. The human then picks which answer is the best one and this is recorded as a win for the LLM that generated the better response and a loss for the one that generated the worse response.  In total, the data set consists of 33000 data points. 

The supervised learning problem is to learn an Elo rating~\cite{Elo78} for each LLM that can be ranked to compare the ``skill level'' (in this context: propensity to generate the better answer) of each LLM. 
Elo is classic rating system that was proposed for ranking chess engines; each LLM is assigned a rating, say $r_i$ and $r_j$, that models the probability of LLM $i$ beating LLM $j$ as a logistic function of there ratings, $\Pr(\mbox{i beats j}) = \sigma \left( (r_i - r_j)/400 \right) = \frac{1}{1 + e^{(r_j - r_i)/400}}$,
where $\sigma$ is the sigmoid function $\sigma(x) = (1 + e^{-x})^{-1}$.
The ratings can be learned using online updates, or more precisely given a batch of data using logistic regression or minorization-maximization (MM) algorithms~\cite{Hunter00}.

We apply data valuation in a similar way as before. 
For each experiment, we first sample a training set and test set pair $(\cD_R, \cD_T)$, with $|\cD_R| = 1000$ and $|\cD_T| = 10000$.
A coalition is then a subset of data points from this training set, $c \subseteq \cD_R$, leading to a 1000-player coalitional game. 
As in the other settings, the characteristic function is a measure of how well the model learned on $c$ performs on the test set $\cD_T$.
For each coalition, $c$, we run the MM method~\cite{Hunter00} for 20000 iterations to learn Elo ratings of the batch of data $c \subseteq \cD_R$. 
Given the learned ratings $\vec{r}_c$, we compute the average cross entropy loss over the test set, $L_{CE}(\vec{r}_c, \cD_T)$. 
Finally, we define the the characteristic function for as $v(c) = 2 - L_{CE}(\vec{r}_c, \cD_T)$.
As before we set a budget of 50000 calls to the characteristic function for both Shapley and the least core computations. Figure~\ref{fig:xai_models_appendix:arena} shows that in contrast to the other data sets and experiments in~\cite{YanProcaccia21}, the Shapley values attribute importances that are more critical to model performance than the least core.

\section{Conclusion}
\label{sect:conclusion}

We examined the core, a distribution of payoff over members in a coalition that is  a central solution concept in cooperative game theory.  We proposed a scalable solver for the least core that can handle the exponential number of possible coalitions that define the core constraints. We provided convergence rates and guarantees for this solver and showed empirically that it is faster than previous core solvers~\cite{YanProcaccia21}. 

Our empirical analysis shows several applications of our core solver, including studying stability of prominent forms of coalitional games and explainable AI (XAI) problems. For XAI, we highlighted that analysis based on the core differs from the current de-facto standard based on the Shapley value. Further, for the purpose of data evaluation, our results indeed show that in certain cases the core outperforms the Shapley value as a way of selecting the data instances for training machine learning models. 

Several problems remain open for future research. First, could one derive even faster algorithms for approximating the core (in terms of the worst-case performance, or in terms of empirical performance on problems of interest)? Second, could better approximation algorithms for the core be tailored to specific classes of games? Third, could one extend our analysis to other known forms of cooperative games to determine the key features that affect coalitional stability in them? Fourth, our analysis has identified ways of selecting data for training models. Could one leverage such results to speed up the training or runtime performance of machine learning models? Finally, when is it better to use the core and when is it better to use the Shapley value for feature importance measurements or data selection?

\bibliographystyle{ACM-Reference-Format} 
\bibliography{paper}

\clearpage
\appendix

\section{Algorithms}

\subsection{Proofs of Theoretical Results}
\label{app:theorem1proof}

We restate Theorem~\ref{prop:sgd-approx} and Lemma~\ref{lemma:monotone} and include their proofs.

\begin{theorem-nonumber}
If $\ell_c(\epsilon, p) \le \gamma^2$ for all $c$, then $p$ is in the $(\epsilon+\sqrt{2n}\gamma)$-core.
\end{theorem-nonumber}
\begin{proof}
\begin{align}
    \ell_c(\epsilon, p) = \frac{1}{2 \vert c \vert}\big(\max(0, v(C) - \epsilon - p^\top c)\big)^2 &\le \gamma^2 \,\, \forall c \in C
    \\ \implies \frac{1}{\sqrt{2 \vert c \vert}} (v(C) - \epsilon - p^\top c) &\le \gamma \,\, \forall c \in C
    \\ \implies v(C) - \epsilon - p^\top c &\le \sqrt{2n} \gamma \,\, \forall c \in C
    \\ \implies v(C) - (\epsilon + \sqrt{2n}\gamma) - p^\top c &\le 0 \,\, \forall c \in C.
\end{align}
\end{proof}

\begin{lemma-nonumber}\label{lemma:monontone}
The map $F$ in~\eqref{eqn:vi_map} is monotone, i.e., $\langle F(x) - F(x'), x - x' \rangle \ge 0$ over all $x, x' \in \mathcal{X}$.
\end{lemma-nonumber}
\begin{proof}
First, for any $x \in \mathcal{X}$ and $x' \in \mathcal{X}$, we find that $F(x)^\top x'$
\begin{align}
    &= \epsilon' + \mu \Big( \sum_{c \in C} \frac{d_c}{\vert C \vert} (-c^\top p' - \epsilon') \Big) - \mu' \Big( \sum_{c \in C} \ell_c(p, \epsilon) - \gamma^2 \Big)
    \\ &= \epsilon' + \mu \Big( \sum_{c \in C} \frac{d_c}{\vert C \vert} (d_c' - v(C)) \Big) - \mu' \Big( \sum_{c \in C} \ell_c(p, \epsilon) - \gamma^2 \Big).
\end{align}

Therefore, $(F(x) - F(x'))^\top (x - x')$
\begin{align}
    &= \epsilon + \mu \Big( \sum_{c \in C} \frac{d_c}{\vert C \vert} (d_c - v(C)) \Big) - \mu \Big( \sum_{c \in C} \ell_c(p, \epsilon) - \gamma^2 \Big)
    \\ &- \epsilon - \mu' \Big( \sum_{c \in C} \frac{d_c'}{\vert C \vert} (d_c - v(C)) \Big) + \mu \Big( \sum_{c \in C} \ell_c(p', \epsilon') - \gamma^2 \Big)
    \\ &- \epsilon' - \mu \Big( \sum_{c \in C} \frac{d_c}{\vert C \vert} (d_c' - v(C)) \Big) + \mu' \Big( \sum_{c \in C} \ell_c(p, \epsilon) - \gamma^2 \Big)
    \\ &+ \epsilon'+ \mu' \Big( \sum_{c \in C} \frac{d_c'}{\vert C \vert} (d_c' - v(C)) \Big) - \mu' \Big( \sum_{c \in C} \ell_c(p', \epsilon') - \gamma^2 \Big)
    \\ &= \Big( \sum_{c \in C} \frac{1}{\vert C \vert} \big[ \mu d_c (d_c - v(C)) - \mu' d_c' (d_c - v(C))
    \\ &- \mu d_c (d_c' - v(C)) + \mu' d_c' (d_c' - v(C)) \big] \Big)
    \\ &- \Big( \sum_{c \in C} (\mu' - \mu) ( \ell_c(p', \epsilon') - \ell_c(p, \epsilon) ) \Big)
    \\ &= \Big( \sum_{c \in C} \frac{1}{\vert C \vert} (d_c' - d_c)(\mu' d_c' - \mu d_c) \Big)
    \\ &- \Big( \sum_{c \in C} (\mu' - \mu) ( \ell_c(p', \epsilon') - \ell_c(p, \epsilon) ) \Big)
    \\ &= \Big( \sum_{c \in C} \frac{1}{\vert C \vert} (d_c' - d_c)(\mu' d_c' - \mu d_c) \Big)
    \\ &- \Big( \sum_{c \in C} \frac{1}{\vert C \vert} \frac{1}{2} (\mu' - \mu) ( d_c^{'2} - d_c^2) ) \Big)
    \\ &= \frac{1}{2} \Big( \sum_{c \in C} \frac{1}{\vert C \vert} (\mu + \mu') ( d_c - d_c')^2 ) \Big) \ge 0
\end{align}
confirming that $F$ is indeed monotone. 
\end{proof}

We further bound the following quantities as they may be of interest to others for further follow-up algorithmic development. We can bound the result in Lemma~\ref{lemma:monotone}. First, recognize
\begin{align}
    (d_c-d_c')^2 &= \begin{cases}
    \big( (\epsilon + c^\top p) - (\epsilon' + c^\top p') \big)^2 & \text{if $d \ge 0$ and $d' \ge 0$} \\
    \big( v(C) - \epsilon - c^\top p \big)^2 & \text{if $d \ge 0$ and $d' < 0$} \\
    \big( v(C) - \epsilon' - c^\top p' \big)^2 & \text{if $d < 0$ and $d' \ge 0$} \\
    0 & \text{else}
    \end{cases}
    \\ &\le \big( (\epsilon + c^\top p) - (\epsilon' + c^\top p') \big)^2
    \\ &+ \max \Big\{ \big( v(C) - \epsilon - c^\top p \big)^2 , \big( v(C) - \epsilon' - c^\top p' \big)^2 \Big\}
    \\ &\le \big( (\epsilon + \epsilon') + (c^\top p - c^\top p') \big)^2 + 4V(I)^2
    \\ &= ||x - x'||^2_{M} + 4V(I)^2
\end{align}
where $M = \begin{bmatrix}
        1 & c^\top & 0 \\
        c & cc^\top & 0 \\
        0 & 0 & 0
    \end{bmatrix} \preceq 2n I$.
Therefore, 
\begin{align}
    0 &\le (F(x) - F(x'))^\top (x - x')
    \\ &= \frac{1}{2} \Big( \sum_{c \in C} \frac{1}{\vert C \vert} (\mu + \mu') (d' - d)^2 ) \Big)
    \\ &\le \Big( \sum_{c \in C} \frac{1}{\vert C \vert} \overline{\mu} ( ||x - x'||^2_{M} + 4 v(I)^2 ) \Big).
\end{align}

We can also bound $||F(x) - F(x')||^2$. First, we can rewrite it as
\begin{align}
    &= || \mu' (\sum_{c \in C} \frac{d'}{C} c) - \mu (\sum_{c \in C} \frac{d'}{C} c) ||^2
    \\ &+ \Big( \mu' (\sum_{c \in C} \frac{d'}{C}) - \mu (\sum_{c \in C} \frac{d'}{C}) \Big)^2
    \\ &+ \Big( (\sum_{c \in C} \ell_c(p', \epsilon')) - (\sum_{c \in C} \ell_c(p, \epsilon)) \Big)^2
    \\ &= || \sum_{c \in C} \frac{c}{C} (\mu'd' - \mu d) ||^2
    \\ &+ \Big( \sum_{c \in C} \frac{1}{C} (\mu'd' - \mu d) \Big)^2
    \\ &+ \Big( \sum_{c \in C} \frac{1}{2C} (d^{'2} - d^2) \Big)^2
\end{align}
and then upper bound it as
\begin{align}
    \\ &\le \sum_{c \in C} || \frac{c}{C} (\mu'd' - \mu d) ||^2 + \sum_{c \in C} \Big( \frac{1}{C} (\mu'd' - \mu d) \Big)^2
    \\ &+ \sum_{c \in C} \Big( \frac{1}{2C} (d^{'2} - d^2) \Big)^2
    \\ &\le \sum_{c \in C} \frac{1}{C} (\mu'd' - \mu d)^2 + \sum_{c \in C} \Big( \frac{1}{C} (\mu'd' - \mu d) \Big)^2
    \\ &+ \sum_{c \in C} \Big( \frac{1}{2C} (d^{'2} - d^2) \Big)^2
    \\ &\le \sum_{c \in C} \frac{1}{C} \Big( 2 (\mu'd' - \mu d)^2 + \frac{1}{4} (d^{'2} - d^2)^2 \Big)
    \\ &\le \sum_{c \in C} \frac{1}{C} \Big( 2 (\mu'd' - \mu d)^2 + \frac{1}{4} (d' + d)^2 (d' - d)^2 \Big).
\end{align}

\subsection{Deployment with Accelerators}
\label{app:accelerators}

It is possible to accelerate Algorithm~\ref{alg:lagrangian_core} on GPU/TPUs. The other algorithms we proposed can reuse these techniques as well. This is a key advantage of iterative, stochastic (Monte Carlo, minibatch) algorithms. The key step to parallelize is $d_c = \max(0, v(c) - \epsilon - p^T c)$ for a batch of constraints.

Let $C$ be a $B \times n$ matrix representing a batch of $B$ coalitions with a single coalition $c$ represented on each row (we are abusing notation from earlier where $C$ represents a single coalition). Each entry of $C$ indicates whether player $i$ is included in coalition $c$. Therefore, this matrix is sparse (and binary). Let $d$ be a vector containing $d_c$ for each coalition $c$. Let $v$ be a vector containing the value of each coalition. Then in JAX~\citep{deepmind2020jax}, $d = \texttt{jax.nn.relu}(v - \epsilon - \texttt{jnp.dot}(C, p))$.

This is quite similar to the operation of processing the output of a single neural network layer followed by a \texttt{ReLU} nonlinearity. $C$ plays the role of the typical weight matrix $W$. $p$ plays the role of the typical input $x$ to a neural network layer. The offset of the neural network layer is $v - \epsilon$. And \texttt{jax.nn.relu} is the nonlinearity.

We now describe how to use the resulting output $d$ to compute the updates of Algorithm~\ref{alg:lagrangian_core}. The update in $F(x)$ (see~\eqref{eqn:vi_map}) can then be easily computed from $d$. Note $\ell(p, \epsilon) = d^2 \oslash (2 C\mathbf{1}_n)$ where $\mathbf{1}_n$ is the length-$n$ ones vector. $F(x)$ is then used to construct the \texttt{Update} in the for loop of Algorithm~\ref{alg:lagrangian_core}.

When the batch size $B$ is very large, the operation $\texttt{jnp.dot}(C, p)$ can take advantage of GPUs/TPUs to process this matrix-vector operation quickly. $C$ is sparse, which can also speed up calculations using sparse matrix multiplication libraries.

Furthermore, if the batch size $B$ is too large to fit $C$ on a single device, the batches can be split up across devices, say $d^j = \texttt{jax.nn.relu}(v^j - \epsilon - \texttt{jnp.dot}(C^j, p))$. The update in $F(x)$ can be computed similarly to before on each device (giving us $F^j(x)$) and then aggregated across devices to produce the final result.

\subsection{Algorithm Hyperparameters}
\label{app:alg_hyps}

We report the hyperparameters we used for Algorithm~\ref{alg:lagrangian_core} below. For the runtime experiment in Figure~\ref{fig:timings-wvgs}, we used those reported in Table~\ref{tab:alg_hyps:timing}.

\begin{table}[!h]
    \centering
    \begin{tabular}{|l|c|}\hline
        Learning Rate Schedule $\eta_t$ & $0.1 \rightarrow 0.01$ over $1000$ steps \\ %
        Coalition Batch Size $B$ & $100$ \\
        Lagrange Mutiplier Initial Value $\mu_{0}$ & $1000$ \\
        Constraint Violation Threshold $\gamma$ & $0.001$ \\
        Number of Iterations $T$ & $10,000$ \\ \hline
    \end{tabular}
    \caption{Algorithm~\ref{alg:lagrangian_core} Hyperparameters: Runtime Experiment. Alternatively, the linear schedule can be written as $\eta_t = \max \big(0.01, 0.1 \cdot (1 - \frac{t}{1000}) + 0.01 (\frac{t}{1000}) \big)$.}
    \label{tab:alg_hyps:timing}
\end{table}

For the remaining experiments, we used a change of variables and replaced $p$ with $\softmax(s)$ where $s$, our new decision variable, is a vector of $n$ real-valued logits. This then allowed us to use the more sophisticated update operator Adam~\citep{kingma2014adam}. Note that the JAX~\citep{deepmind2020jax} autodiff library makes it trivial to compute $F_{s} = \nabla_s \mathcal{L}(p(s), \epsilon, \mu)$ after the change of variables. We used the hyperparameters reported in Table~\ref{tab:alg_hyps:no_timing}.

\begin{table}[!h]
    \centering
    \begin{tabular}{|l|c|}\hline
        Learning Rate Schedule $\eta_t$ & $0.1$ \\
        Coalition Batch Size $B$ & $1000$ \\
        Lagrange Mutiplier Initial Value $\mu_{0}$ & $1000$ \\
        Constraint Violation Threshold $\gamma$ & $0.01$ \\
        Number of Iterations $T$ & $10,000$ \\ \hline
    \end{tabular}
    \caption{Algorithm~\ref{alg:lagrangian_core} Hyperparameters: Other Experiments}
    \label{tab:alg_hyps:no_timing}
\end{table}

\section{Shapley Value versus the Core}
\label{sec:shapley_vs_core}

The Shapley value is an intuitive solution concept for quantifying the expected marginal contribution of a player. The {\it marginal contribution} of agent $i$ to a coalition $C$ such that $i \notin C$ is the increase in utility the coalition $C$ achieves when $i$ joins it, defined as $m_i^C = v(C\cup\{i\})-v(C)$. We can define a similar concept for {\it permutations} of agents.  We denote the predecessors of $i$ in the the permutation $\pi$ as $b(i,\pi)$, i.e. the agents appearing before $i$ in the permutation $\pi$. We denote the set of all permutations over the agents as $\Pi$. The  marginal contribution of player $i$ in the permutation $\pi$ is the increase in utility $i$ provides when joining the team of agents appearing before them in the permutation,  $m_i^\pi=v(b(i,\pi)\cup \{i\})-v(b(i,\pi))$. The {\bf Shapley value} of agent $i$ is defined as their marginal contribution, averaged across all permutations:

\begin{equation} \label{eq:shapley}
	\Phi_i(v) = \frac{1}{N!}\sum_{\pi\in\Pi} v(b(i,\pi))\cup\{i\})-v (b(i,\pi)) 
\end{equation}

The Shapley value uniquely exhibits four axiomatic properties~\cite{Chalkiadakis12Computational}: {\it efficiency}: $\sum_{i=1}^n \Phi_i(v) = v(I)$,
{\it null player}: if $i$ contributes nothing, then $\Phi_i(v) = 0$, {\it symmetry}: $\forall C \subseteq I - \{i,j\}, v(C \cup \{i\}) = v(C \cup \{j\}) \Rightarrow \Phi_i(v) = \Phi_j(v)$, and {\it additivity}: for two games $v_1, v_2$, the Shapley value of the combined game $\Phi_i(v_1 + v_2) = \Phi_i(v_1) + \Phi_i(v_2)$.
One advantage of the Shapley value is that it can also be computed using Monte Carlo sampling~\cite{bachrach2010approximating,Mitchell22}. The Shapley value has become widely-used as a means of quantifying the effects of individual features in XAI~\cite{lundberg2017unified}.

As discussed in the main paper, there are some computational and conceptual limitations of the Shapley value. Computationally, there are some games where no polynomial-time algorithm can obtain sufficiently small confidence intervals~\cite{Bachrach08Coalitional}
and others for which it cannot be computed using a uniform distribution~\cite{Balkanski17Statistical}.
For XAI applications, the Shapley value has been criticized for several reasons. Firstly, the Shapley value does not necessarily reflect what human end users really want to explain about models in deployment~\cite{Bhatt19}. As such, it has been labeled ``not a natural solution to the human-centric issues of explainability''~\cite{Kumar20}.
Second, when features are correlated, it is not clear how Shapley values should attribute relative importance~\cite{Kumar20}.
Third, the usefulness of the additivity axiom is questionable~\cite{YanProcaccia21} and can be uninformative for non-additive models~\cite{Kumar20}.
Finally, when humans aided with a model were evaluated on alert processing, there was no significant difference in task utility metrics when the Shapley values were provided versus  not provided~\cite{Weerts19}.

In contrast, the core focuses on the {\it stability} of the imputations. 
In particular, utility subdivisions discovered by the core can be interpreted as ``economically plausible payments in a competitive market in the sense that every coalition should be compensated for its market value''~\cite{YanProcaccia21}.
Imputations outside the core would lead to some agents having incentive to form different coalitions for higher value and redistribute the extra payoff among themselves. In some contexts, approximations of the least core can attribute importance more reliably than Shapley value in XAI~\cite{YanProcaccia21}.
Outside of XAI, the core is used to select payment rules in combinatorial auctions due to problems with the commonly-used mechanisms~\cite{Moor16,DayCramton12}.

Lastly, there is evidence that the Shapley value has weaker empirical support than the core for predicting human payment divisions in behavioral studies~\cite{Williams88}.
In a more recent study, behavior of players deviated from the Shapley value predictions and violated two of its axioms~\cite{DEonLarson20}: null player and additivity. 
For all these reasons, we believe that the least core is a worthy alternative solution concept in general and also for XAI.

\subsection{Additional Timing Experiment}\label{app:timing}
In Figure~\ref{fig:timings-wggs}, we replicate the weighted voting games timing experiment from Figure~\ref{fig:timings-wvgs} but now for a weighted graph game, specifically the Newman Watts Strogatz game in Figure~\ref{fig:wgg:nws} with game parameters $p$ and $k$ sampled from $[4, 24]$ and $[0, 1]$ as indicated in the figure. We also display two additional runs of the Core Lagrangian approach with two different learning rates (LR) around the base learning rate: learning rate schedule $= \texttt{linear\_schedule}(2 \times LR, 0.1 \times LR, 10^3)$ with base $LR = 0.5$. Lastly, we report approximation quality for the Monte-Carlo Shapley valueas measured by euclidean distance to its approximate ground truth value. In this case, the LP method outperforms the CL method suggesting it is important keep both approaches in mind when attempting to solve new problem domains.

\begin{figure}[t]
\centering
\includegraphics[width=0.48\textwidth]{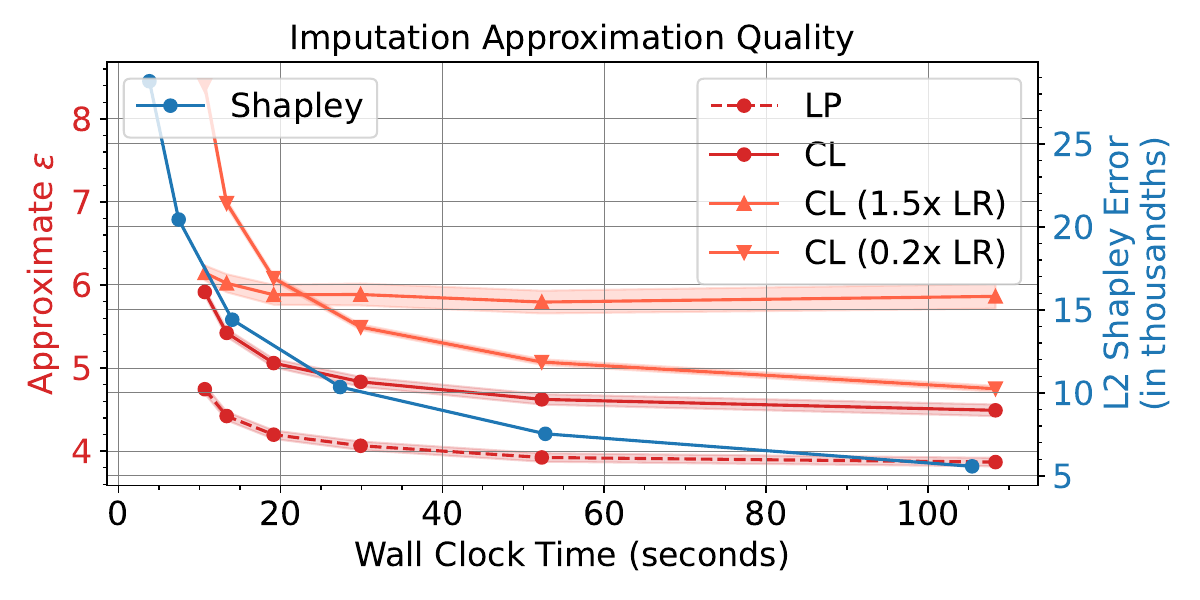}
\caption{Approximation quality of the linear program (LP), core Lagrangian (CL), and Shapley method as a function of computation time (seconds).
The $x$-axis corresponds to wall-clock time taken by all algorithms run side-by-side given $k \in \{500, 1000, 2000, 4000, 8000, 16000\}$ sampled coalitions for the LP method, $t_k$. A better core approximation quality is reflected in having a lower $\epsilon$ for the same runtime. The left $y$-axis represents the approximate $\epsilon$ of the least-core solution found, $\hat{\epsilon}(p_{LP}, \hat{C})$ and  $\hat{\epsilon}(p_{CL}, \hat{C})$, computed over the same set of $2^{25}$ coalitions, $\hat{C}$.
As the Shapley solution is not motivated by stability, we measure the approximate solution's quality by euclidean distance to the Shapley value estimated with $10$ million Monte-Carlo samples. Each Shapley approximation is estimated using $125 \times k$ samples.
Each data point $(t_k, \hat{\epsilon}(p_x, \hat{C}))$ represents an average over the same set of $2,500$ random weighted voting games with shading indicating standard error of the mean.
\label{fig:timings-wggs}}
\end{figure}

\section{Graph Games}
\label{app:graph-games}

\subsection{Other graph games}

We present $4$ more graph games to complement Section~\ref{sec:graph-games}:

{\bf Partition Graph} are graphs where the vertices are split into $n$ partitions. Similar to the Erdős-Rényi graph, an edge occurs between vertices {\em within} a partition with probability $p_{in}$ and with probability $p_{out}$ {\em across} partitions.
{\bf Dual Barabasi Albert} graphs are generated by connecting each new vertex with either $m_1$ edges with probability $p$ or $m_2$ edges with probability $1-p$ \citep{moshiri2018dualbarabasialbert} (with edge targets selected at random from the current vertices). 
{\bf Powerlaw Clusters} are graphs with a powerlaw degree distribution and approximate average clustering \citep{holme2002powerlawgraph}, where $m$ is the number of edges to randomly add to each node and $p$ is the probability of adding a triangle after adding each edge.
{\bf Random Uniform Intersection} A random subset of $m$ elements is assigned to every vertex in a graph, each element being assigned independently with probability $p$. An edge between two vertices exists if their element subsets intersect.

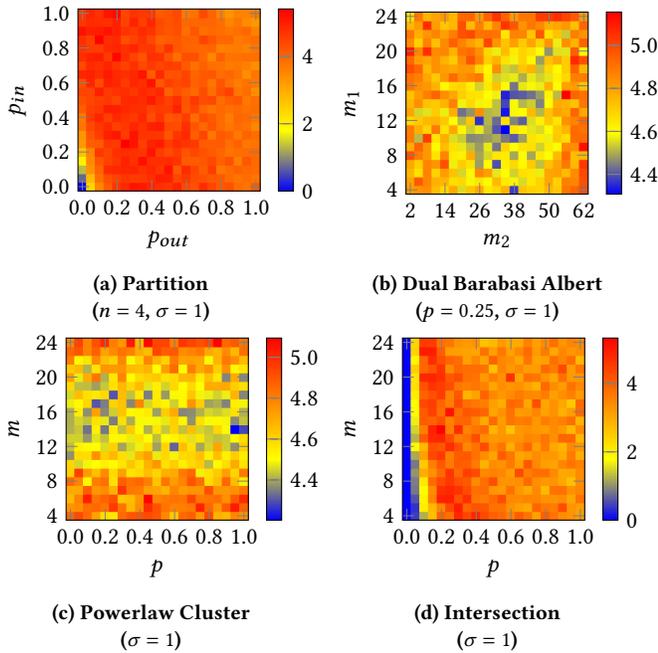
\begin{figure}[t]
    \centering
    
    \begin{subfigure}{0.225\textwidth}
        \centering
        \begin{tikzpicture}
            \begin{axis}[
                xlabel=$p_{out}$,
                ylabel=$p_{in}$,
                ylabel near ticks,
                xlabel near ticks,
                xtick={0,4,8,12,16,20},
                xticklabels={0.0,0.2,0.4,0.6,0.8,1.0},
                ytick={0,4,8,12,16,20},
                yticklabels={0.0,0.2,0.4,0.6,0.8,1.0},
                enlargelimits=false,
                axis on top,
                colorbar,
                colorbar style={
                    xlabel near ticks,
                    yticklabel style={
                        /pgf/number format/.cd,
                        fixed,
                        precision=0,
                        fixed zerofill,
                    },
                    width=2mm,
                    at={(1.1,1)},
                },
                width=\linewidth,
                height=\linewidth,
            ]
                \addplot [matrix plot*,point meta=explicit] file [] {data/sweep_partition.dat};
            \end{axis}
        \end{tikzpicture}
        \captionsetup{justification=centering}
        \caption{Partition\\($n=4$, $\sigma=1$)}
    \end{subfigure}\hfill%
    \begin{subfigure}{0.225\textwidth}
        \centering
        \begin{tikzpicture}
            \begin{axis}[
                xlabel=$m_2$,
                ylabel=$m_1$,
                ylabel near ticks,
                xlabel near ticks,
                xtick={0,4,8,12,16,20},
                xticklabels={2,14,26,38,50,62},
                ytick={0,4,8,12,16,20},
                yticklabels={4,8,12,16,20,24},
                enlargelimits=false,
                axis on top,
                colorbar,
                colorbar style={
                    xlabel near ticks,
                    yticklabel style={
                        /pgf/number format/.cd,
                        fixed,
                        precision=1,
                        fixed zerofill,
                    },
                    width=2mm,
                    at={(1.1,1)},
                },
                width=\linewidth,
                height=\linewidth,
            ]
                \addplot [matrix plot*,point meta=explicit] file [] {data/sweep_dual_barabasi.dat};
            \end{axis}
        \end{tikzpicture}
        \captionsetup{justification=centering}
        \caption{Dual Barabasi Albert\\($p=0.25$, $\sigma=1$)}
    \end{subfigure}%
    
    \begin{subfigure}{0.225\textwidth}
        \centering
        \begin{tikzpicture}
            \begin{axis}[
                xlabel=$p$,
                ylabel=$m$,
                ylabel near ticks,
                xlabel near ticks,
                xtick={0,4,8,12,16,20},
                xticklabels={0.0,0.2,0.4,0.6,0.8,1.0},
                ytick={0,4,8,12,16,20},
                yticklabels={4,8,12,16,20,24},
                enlargelimits=false,
                axis on top,
                colorbar,
                colorbar style={
                    xlabel near ticks,
                    yticklabel style={
                        /pgf/number format/.cd,
                        fixed,
                        precision=1,
                        fixed zerofill,
                    },
                    width=2mm,
                    at={(1.1,1)},
                },
                width=\linewidth,
                height=\linewidth,
            ]
                \addplot[matrix plot*,point meta=explicit] file [] {data/sweep_powerlaw_cluster.dat};
            \end{axis}
        \end{tikzpicture}
        \captionsetup{justification=centering}
        \caption{Powerlaw Cluster\\($\sigma=1$)}
    \end{subfigure}\hfill%
    \begin{subfigure}{0.225\textwidth}
        \centering
        \begin{tikzpicture}
            \begin{axis}[
                xlabel=$p$,
                ylabel=$m$,
                ylabel near ticks,
                xlabel near ticks,
                xtick={0,4,8,12,16,20},
                xticklabels={0.0,0.2,0.4,0.6,0.8,1.0},
                ytick={0,4,8,12,16,20},
                yticklabels={4,8,12,16,20,24},
                enlargelimits=false,
                axis on top,
                colorbar,
                colorbar style={
                    xlabel near ticks,
                    yticklabel style={
                        /pgf/number format/.cd,
                        fixed,
                        precision=0,
                        fixed zerofill,
                    },
                    width=2mm,
                    at={(1.1,1)},
                },
                width=\linewidth,
                height=\linewidth,
            ]
                \addplot [matrix plot*,point meta=explicit] file [] {data/sweep_intersection.dat};
            \end{axis}
        \end{tikzpicture}
        \captionsetup{justification=centering}
        \caption{Intersection\\($\sigma=1$)}
    \end{subfigure}

    \caption{Mean LCV for different classes of graphs when sweeping over two hyperparameters. Constant hyperparameters are shown in parenthesis.}
    \label{fig:sweep_graphs_appendix}
\end{figure}

We observe in Figure~\ref{fig:sweep_graphs_appendix} that similarly to Erdős-Rényi, partition graphs have the opposite phenomenon of a small corner of the space with stable games (low $p_{in}$ and $p_{out}$), with the games with moderate values of the two parameters being less stable. In contrast, Dual Barabasi Albert graphs have stable games when both parameters have medium values, with less stability as either of the parameters deviating from these values. Finally, similarly to Newman Watts Strogatz, Intersection graphs indicate a spectrum between stable and unstable games (with the $p$ parameter having strong influence in Intersection graphs, and $k$ parameter having the stronger influence on stability in NWS graphs).

\subsection{Example Graphs}

Figure~\ref{fig:example_graph_game} presents an example of an induced subgraph game, describing how the charactersitic function is calculated in this game. 

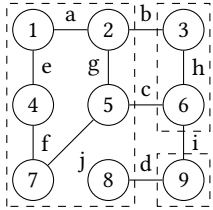
\begin{figure}[!ht]
    \centering
    \begin{tikzpicture}[scale=1.0]
        \node[circle, draw] (1) at (0,2) {1};
        \node[circle, draw] (2) at (1,2) {2};
        \node[circle, draw] (3) at (2,2) {3};
        \node[circle, draw] (4) at (0,1) {4};
        \node[circle, draw] (5) at (1,1) {5};
        \node[circle, draw] (6) at (2,1) {6};
        \node[circle, draw] (7) at (0,0) {7};
        \node[circle, draw] (8) at (1,0) {8};
        \node[circle, draw] (9) at (2,0) {9};
        
        \draw [dashed] ++(-0.35,-0.35) rectangle ++(1.7,2.7);
        \draw [dashed] ++(1.65,0.65) rectangle ++(0.7,1.7);
        \draw [dashed] ++(1.65,-0.35) rectangle ++(0.7,0.7);
        
        \graph {
            (1) --[edge label=a] (2) --[edge label=b] (3);
                   (5) --[edge label=c] (6);
                   (8) --[edge label=d] (9);
            
            (1) --[edge label=e] (4) --[edge label=f] (7);
            (2) --[edge label'=g] (5);
            (3) --[edge label=h] (6);
            (6) --[edge label=i] (9);

            (7) --[edge label'=j] (5);
        };
    \end{tikzpicture}
    \caption{Example graph game with 9 players. The coalition with players $\{1, 2, 4, 5, 7, 8\}$ has payoff $a+e+g+f+j$, the coalition with players $\{3, 6\}$ has payoff $h$, and the coalition with players $\{ 9 \}$ has payoff $0$ (as do all singleton coalitions). Intuitively, the edges correspond to synergies between players that are only realised if they are in the same coalition.}
    \label{fig:example_graph_game}
\end{figure}

Figures \ref{fig:example_graphs} show examples of randomly generated graphs, of the various classes considered in the paper, along with their representation as weighted adjacency matrices. 

\subsubsection{Marginal Contribution Networks}
\label{sect:empirical_mcn}

The final class of games we study are the marginal contribution networks (MCNs)~\cite{ieong2005marginal}.
In an MCN, the cooperative game is defined by a set of rules $\{r_1,\ldots, r_k\}$ such that each rule takes the form $r=(P,N)$ where $P,N\subseteq I$ refer to the positive and negative parts respectively.  Given a coalition $C\subseteq I$, a rule $r=(P,N)$ \emph{applies} to $C$ if and only if
$P\subseteq C$ and $C\cap N=\emptyset$. Clearly multiple rules can apply to a coalition and so we let $R_C$ be the set of rules that apply to $C$. The resulting characteristic function is thus $v(C) = \sum_{r \in R_C} w(r)$.

To generate random MCNs, we generate $k$ rules, each using the following sampling process: for each player, they are included in $P$ with probability $p$, and in $N$ with probability $q$.  If an agent is selected for both $P$ and $N$ the rule is discarded and re-sampled.

Figure~\ref{fig:mcn_lcv} shows the impact of the parameters on the expected least-core value. In Figure~\ref{fig:mcn_lcv_a} we vary $p$ and $q$ while in Figure~\ref{fig:mcn_lcv_b} we change both the number of rules and the number of players. 
In particular, we observe that the expected least-core value (and hence, the expected stability of the game) depends on all four parameters. For example, games with low $p$ but high $q$ (i.e. coalitions are more likely to not have any rules apply to them) are very stable, but so are games with few rules and many agents.  These insights open up new research questions such as the possibility for designing special-purpose algorithms for MCNs under specific parameter-regimes.

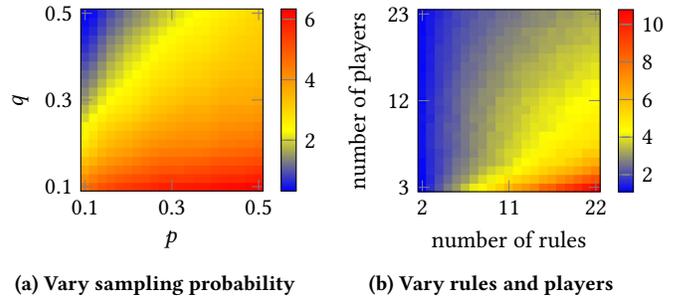
\begin{figure}[t]%
    \begin{subfigure}[t]{0.225\textwidth}
        \centering
        \begin{tikzpicture}
            \begin{axis}[
                xlabel=$p$,
                ylabel=$q$,
                ylabel near ticks,
                xlabel near ticks,
                xtick={0,10,20},
                xticklabels={0.1,0.3,0.5},
                ytick={0,10,20},
                yticklabels={0.1,0.3,0.5},
                enlargelimits=false,
                axis on top,
                colorbar,
                colorbar style={
                    xlabel near ticks,
                    yticklabel style={
                        /pgf/number format/.cd,
                        fixed,
                        precision=0,
                        fixed zerofill,
                    },
                    width=2mm,
                    at={(1.1,1)},
                },
                width=\linewidth,
                height=\linewidth,
            ]
                \addplot [matrix plot*,point meta=explicit] file [] {data/mcn_p_vs_q.dat};
            \end{axis}
        \end{tikzpicture}
        \caption{Vary sampling probability}
        \label{fig:mcn_lcv_a}
    \end{subfigure}\hfill
    \begin{subfigure}[t]{0.225\textwidth}
        \centering
        \begin{tikzpicture}
            \begin{axis}[
                xlabel=number of rules,
                ylabel=number of players,
                ylabel near ticks,
                xlabel near ticks,
                xtick={0,10,20},
                xticklabels={2,11,22},
                ytick={0,10,20},
                yticklabels={3,12,23},
                enlargelimits=false,
                axis on top,
                colorbar,
                colorbar style={
                    xlabel near ticks,
                    yticklabel style={
                        /pgf/number format/.cd,
                        fixed,
                        precision=0,
                        fixed zerofill,
                    },
                    width=2mm,
                    at={(1.1,1)},
                },
                width=\linewidth,
                height=\linewidth,
            ]
                \addplot [matrix plot*,point meta=explicit] file [] {data/mcn_player_vs_rule.dat};
            \end{axis}
        \end{tikzpicture}
        \caption{Vary rules and players}
        \label{fig:mcn_lcv_b}
    \end{subfigure}
    \caption{Mean least-core value over $200$ randomly generated MCNs ($10$ players, $k=10$ rules).}
    \label{fig:mcn_lcv}
\end{figure}

\section{XAI: Datasets and Additional Experiments}

\subsection{Datasets}
\label{app:xai_datasets}

We list statistics of our XAI datasets in Table~\ref{tab:xai_datasets}. We used 80\% of the data for training and kept 20\% aside for testing and measuring the performance of the AI model.

\begin{table}[]
    \centering
    \begin{tabular}{c|c|c|c}
        Dataset & \# of Features & \# of Samples & Task \\ \hline
        Breast Cancer & 569 & 30 & Classification \\
        Diabetes & 442 & 10 & Regression \\
        Wine & 178 & 13 & Classification \\
        Boston Housing & 506 & 13 & Regression
    \end{tabular}
    \caption{Statistics of ML Tasks in XAI Experiments}
    \label{tab:xai_datasets}
\end{table}

\subsection{Additional Data Valuation Experiments}
\label{app:data_valuation}

We repeated the data valuation experiment described in Section~\ref{sect:xai_core_empirical} on all the datasets presented in the global and local experiments. 
Recall that data valuation assigns importances to each data point rather than features. 
Figure~\ref{fig:xai_models_appendix} shows a similar trend that removing the top data points the least core seems to be more critical for model performance than Shapley values. The trend sometimes switches after 40-50\% of data removed.

\begin{figure*}[t!]
    \centering
    \pgfplotsset{
        colormap={redyellowgreen}{rgb255=(255,0,0) rgb255=(255,255,155) rgb255=(0,128,0)}
    }
    \pgfplotsset{
        colormap={bluewhitered}{rgb255=(0,0,255) rgb255=(255,255,255) rgb255=(255,0,0)}
    }
    \begin{subfigure}[t]{0.15\textwidth}
        \centering
        \begin{tikzpicture}%
            \begin{axis}[
                colormap name=bluewhitered, 
                point meta min=-2,
                point meta max=+2,
                enlargelimits=false,
                axis on top,
                width=\linewidth,
                height=\linewidth,
                ticks=none,
            ]
                \addplot [matrix plot,point meta=explicit] file [] {data/example_erdos_renyi.dat};
            \end{axis}
        \end{tikzpicture}
        \captionsetup{justification=centering}
        \caption{Erdős-Rényi\\$p=0.4$}
    \end{subfigure}\hfill%
    \begin{subfigure}[t]{0.15\textwidth}
        \centering
        \begin{tikzpicture}%
            \begin{axis}[
                colormap name=bluewhitered, 
                point meta min=-2,
                point meta max=+2,
                enlargelimits=false,
                axis on top,
                width=\linewidth,
                height=\linewidth,
                ticks=none,
            ]
                \addplot [matrix plot,point meta=explicit] file [] {data/example_partition.dat};
            \end{axis}
        \end{tikzpicture}
        \captionsetup{justification=centering}
        \caption{Partition\\$p_{in}=0.9$, $p_{out}=0.1$}
    \end{subfigure}\hfill%
    \begin{subfigure}[t]{0.15\textwidth}
        \centering
        \begin{tikzpicture}%
            \begin{axis}[
                colormap name=bluewhitered, 
                point meta min=-2,
                point meta max=+2,
                enlargelimits=false,
                axis on top,
                width=\linewidth,
                height=\linewidth,
                ticks=none,
            ]
                \addplot [matrix plot,point meta=explicit] file [] {data/example_dual_barabasi.dat};
            \end{axis}
        \end{tikzpicture}
        \captionsetup{justification=centering}
        \caption{Dual Barabasi Albert\\$m_1=8$, $m_2=8$, $p=0.2$}
    \end{subfigure}\hfill%
    \begin{subfigure}[t]{0.15\textwidth}
        \centering
        \begin{tikzpicture}%
            \begin{axis}[
                colormap name=bluewhitered, 
                point meta min=-2,
                point meta max=+2,
                enlargelimits=false,
                axis on top,
                width=\linewidth,
                height=\linewidth,
                ticks=none,
            ]
                \addplot [matrix plot,point meta=explicit] file [] {data/example_powerlaw_cluster.dat};
            \end{axis}
        \end{tikzpicture}
        \captionsetup{justification=centering}
        \caption{Powerlaw Cluster\\$m=8$, $p=0.8$}
    \end{subfigure}\hfill%
    \begin{subfigure}[t]{0.15\textwidth}
        \centering
        \begin{tikzpicture}%
            \begin{axis}[
                colormap name=bluewhitered, 
                point meta min=-2,
                point meta max=+2,
                enlargelimits=false,
                axis on top,
                width=\linewidth,
                height=\linewidth,
                ticks=none,
            ]
                \addplot [matrix plot,point meta=explicit] file [] {data/example_uniform_random_intersection.dat};
            \end{axis}
        \end{tikzpicture}
        \captionsetup{justification=centering}
        \caption{Intersection\\$m=8$, $p=0.4$}
    \end{subfigure}\hfill%
    \begin{subfigure}[t]{0.15\textwidth}
        \centering
        \begin{tikzpicture}%
            \begin{axis}[
                colormap name=bluewhitered, 
                point meta min=-2,
                point meta max=+2,
                enlargelimits=false,
                axis on top,
                width=\linewidth,
                height=\linewidth,
                ticks=none,
            ]
                \addplot [matrix plot,point meta=explicit] file [] {data/example_newman_watts.dat};
            \end{axis}
        \end{tikzpicture}
        \captionsetup{justification=centering}
        \caption{Newman Watts Strogatz\\$k=8$, $p=0.4$}
    \end{subfigure}
    
    \begin{subfigure}[t]{0.15\linewidth}
        \centering
        \includegraphics[width=\linewidth]{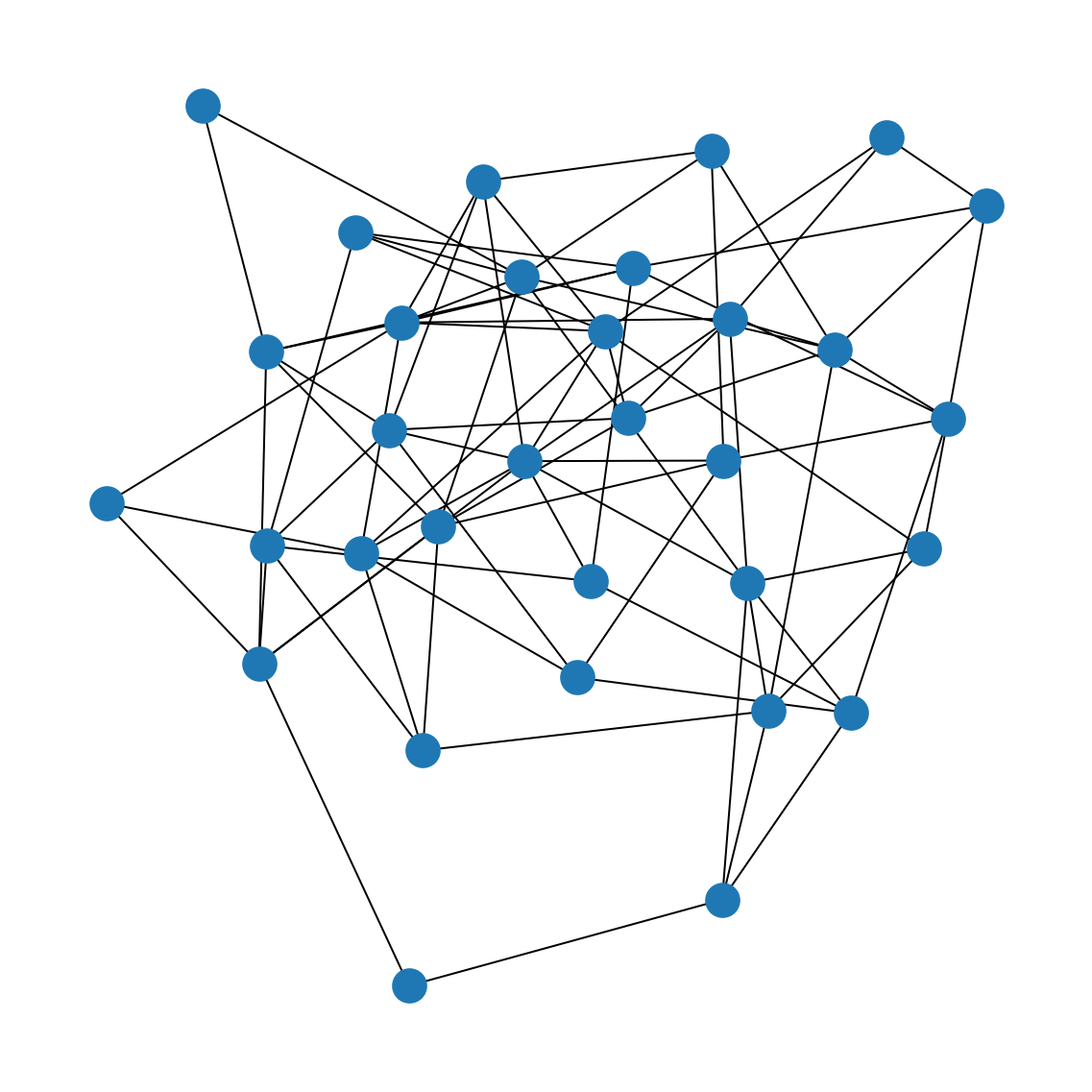}
        \captionsetup{justification=centering}
        \caption{Erdős-Rényi}
    \end{subfigure}\hfill%
    \begin{subfigure}[t]{0.15\linewidth}
        \centering
        \includegraphics[width=\linewidth]{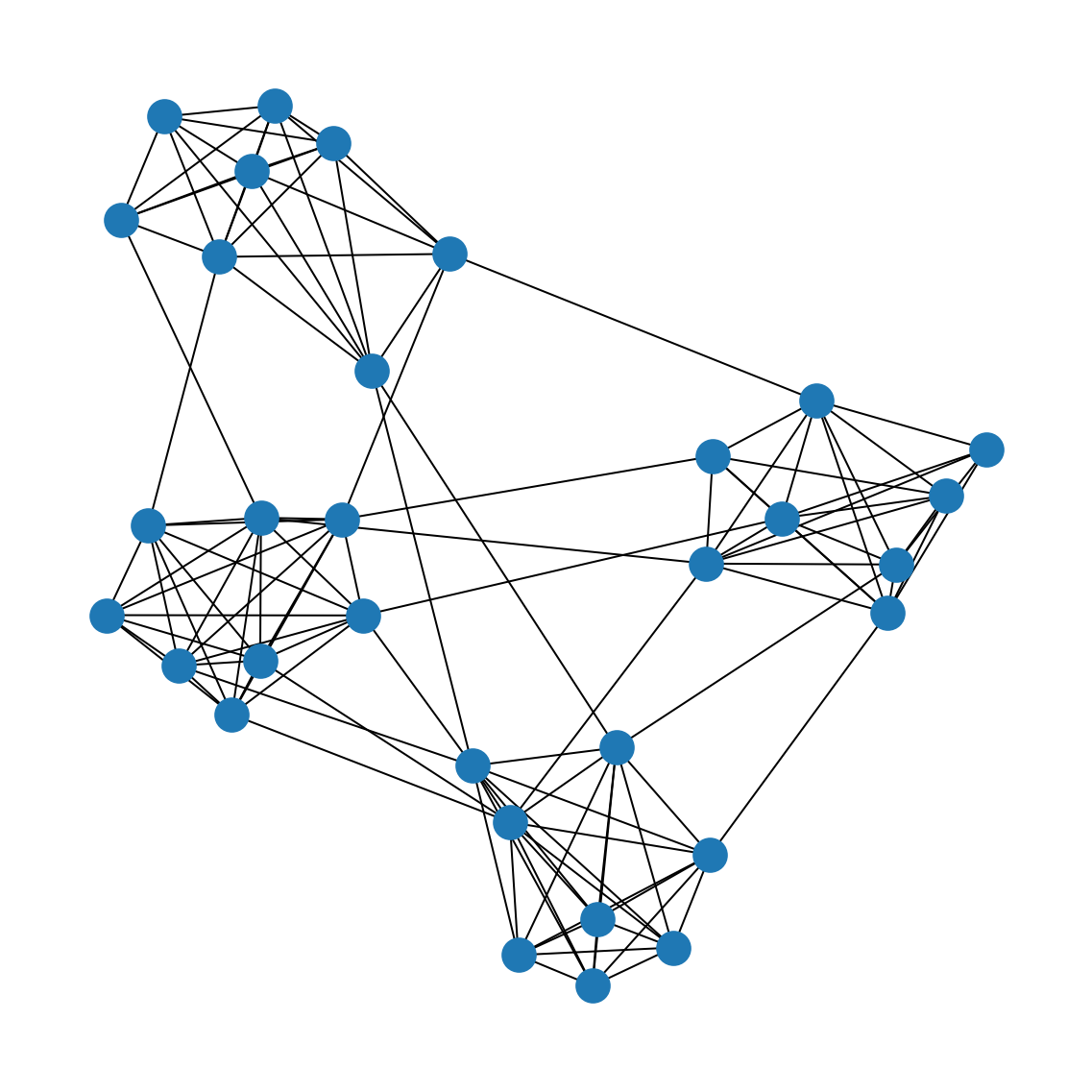}
        \captionsetup{justification=centering}
        \caption{Partition}
    \end{subfigure}\hfill%
    \begin{subfigure}[t]{0.15\linewidth}
        \centering
        \includegraphics[width=\linewidth]{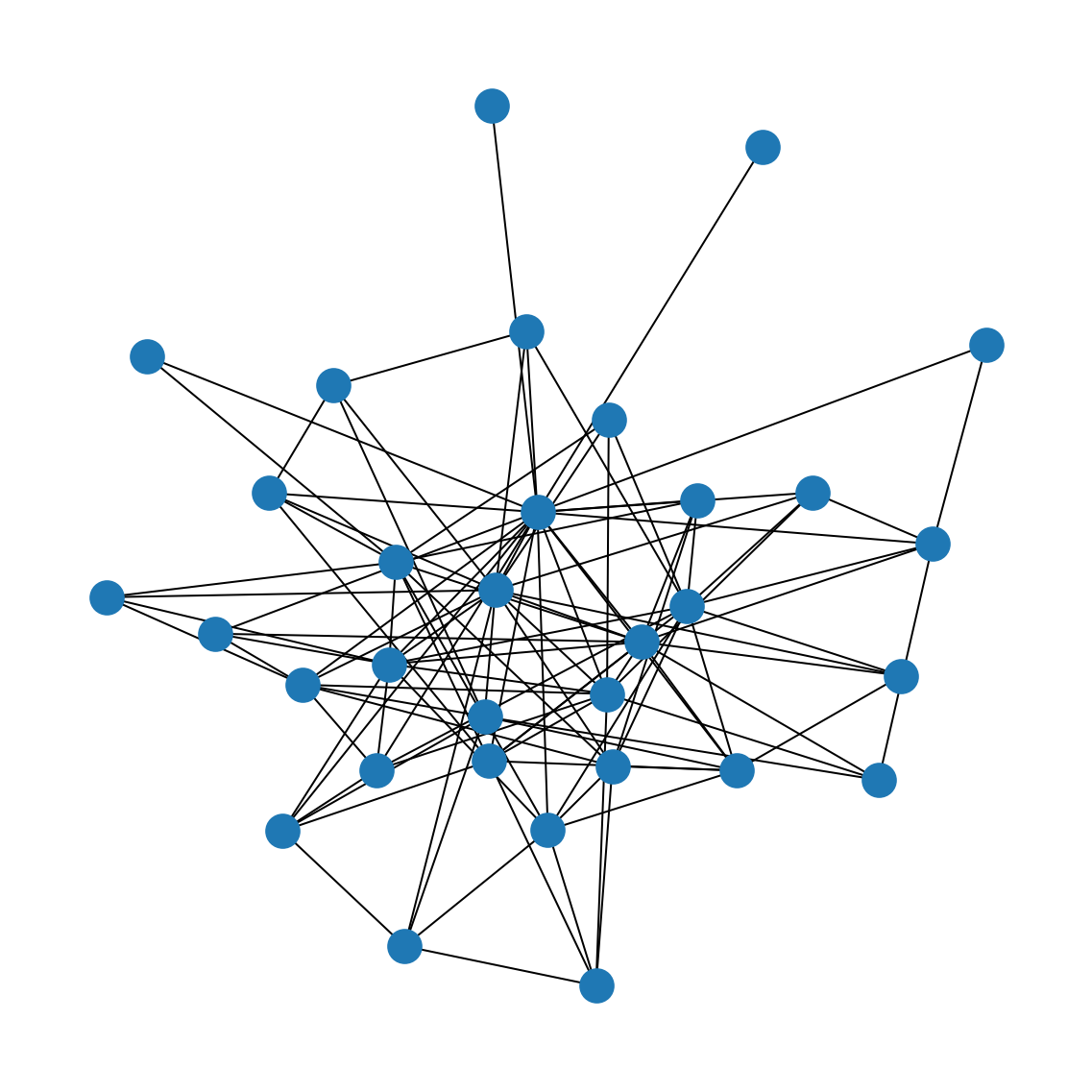}
        \captionsetup{justification=centering}
        \caption{Dual Barabasi Albert}
    \end{subfigure}\hfill%
    \begin{subfigure}[t]{0.15\linewidth}
        \centering
        \includegraphics[width=\linewidth]{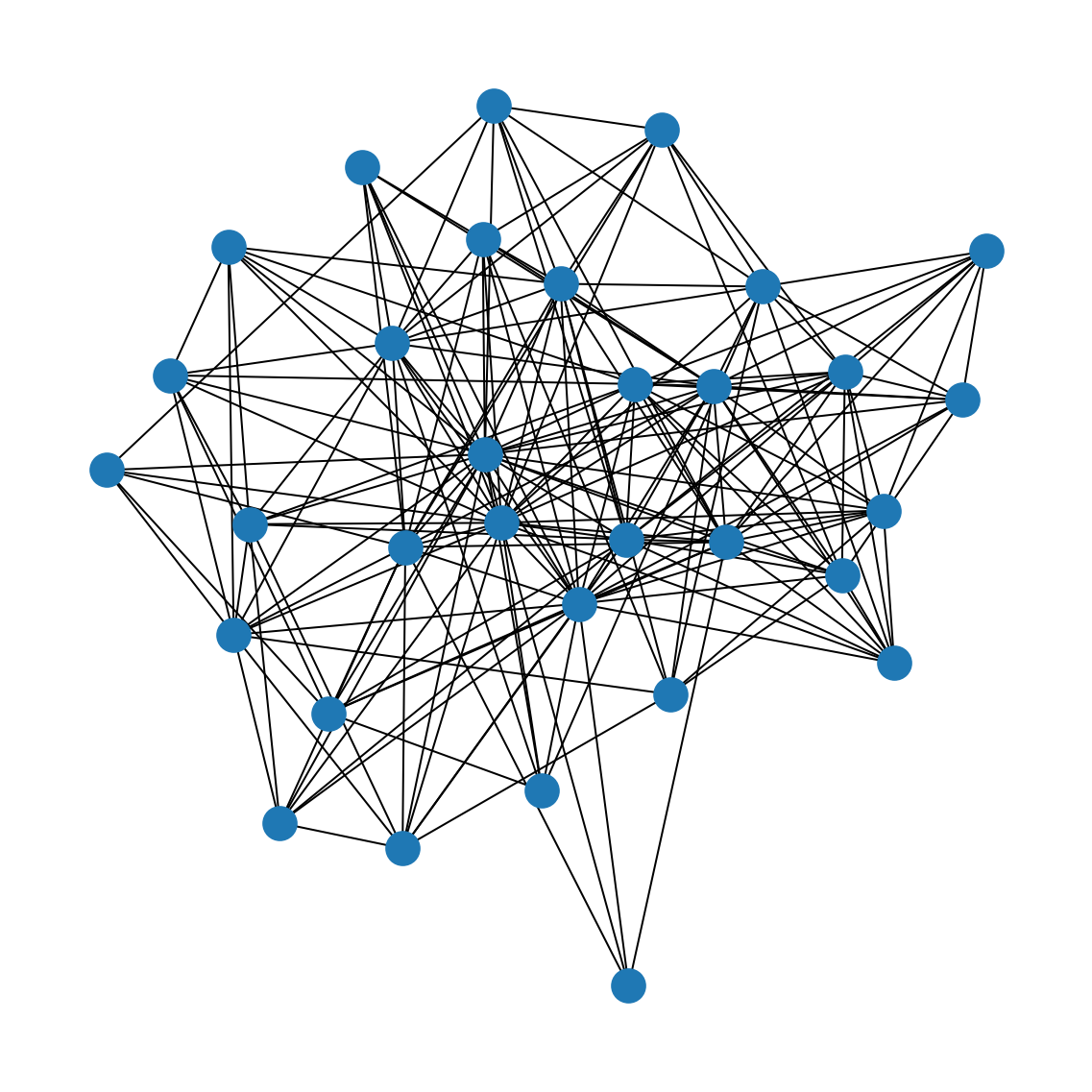}
        \captionsetup{justification=centering}
        \caption{Powerlaw Cluster}
    \end{subfigure}\hfill%
    \begin{subfigure}[t]{0.15\linewidth}
        \centering
        \includegraphics[width=\linewidth]{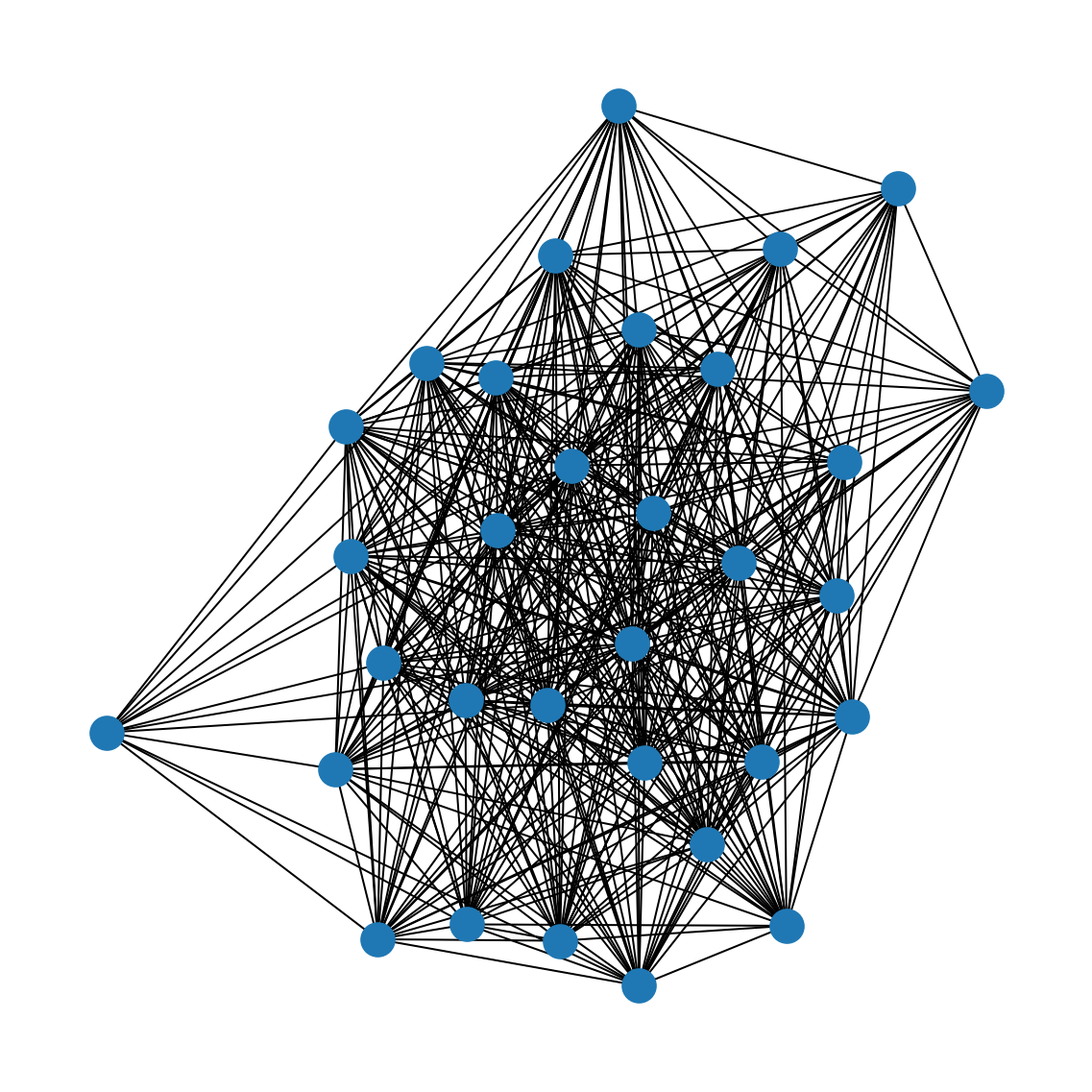}
        \captionsetup{justification=centering}
        \caption{Intersection}
    \end{subfigure}\hfill%
    \begin{subfigure}[t]{0.15\linewidth}
        \centering
        \includegraphics[width=\linewidth]{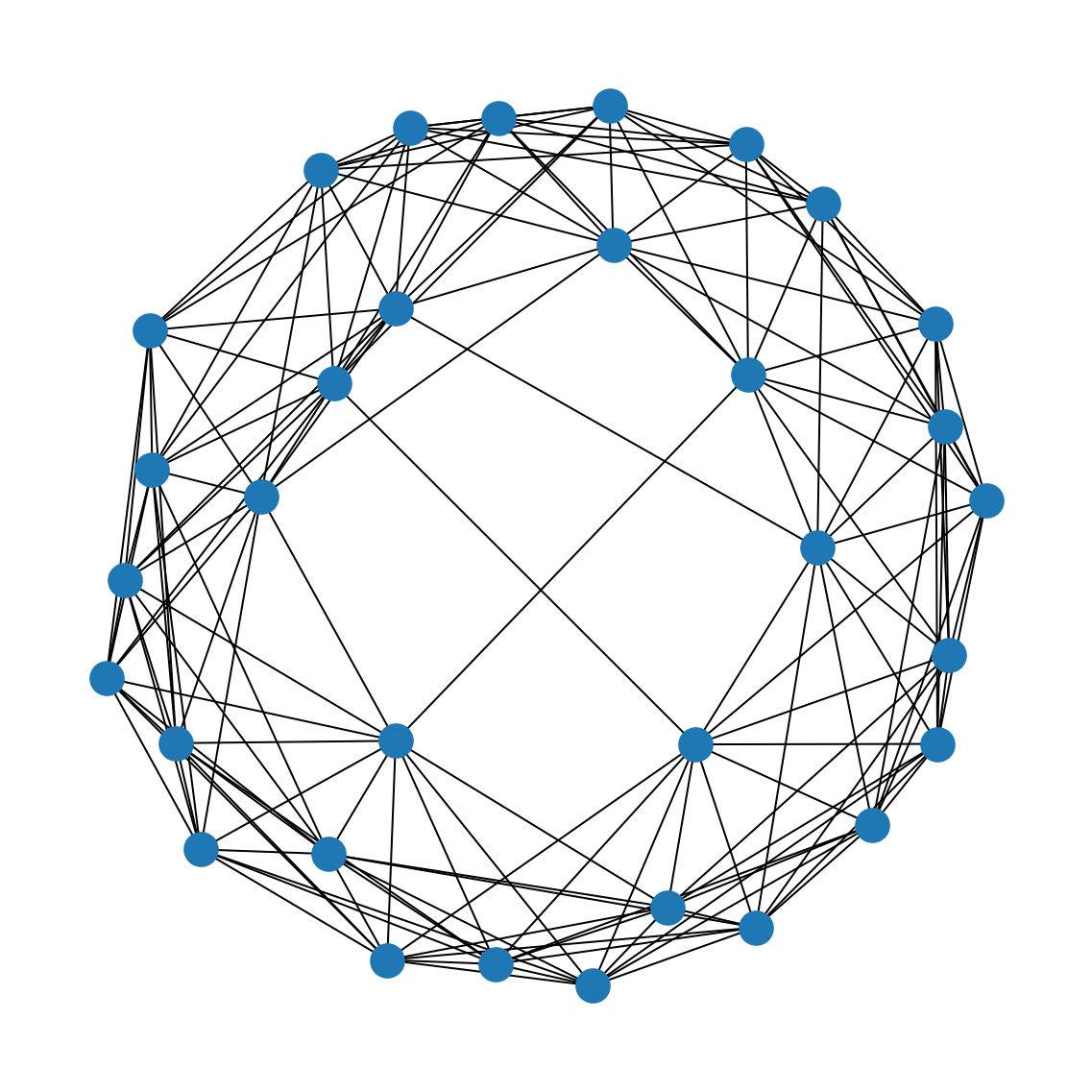}
        \captionsetup{justification=centering}
        \caption{Newman Watts Strogatz}
    \end{subfigure}

    \caption{Example adjacency matrices and vertex and edge plots of different graph classes.}
    \label{fig:example_graphs}
\end{figure*}

\begin{figure*}[t]
    \centering
    \begin{subfigure}[t]{0.49\textwidth}
        \centering
        \caption{Wine}
        \includegraphics[width=0.95\textwidth]{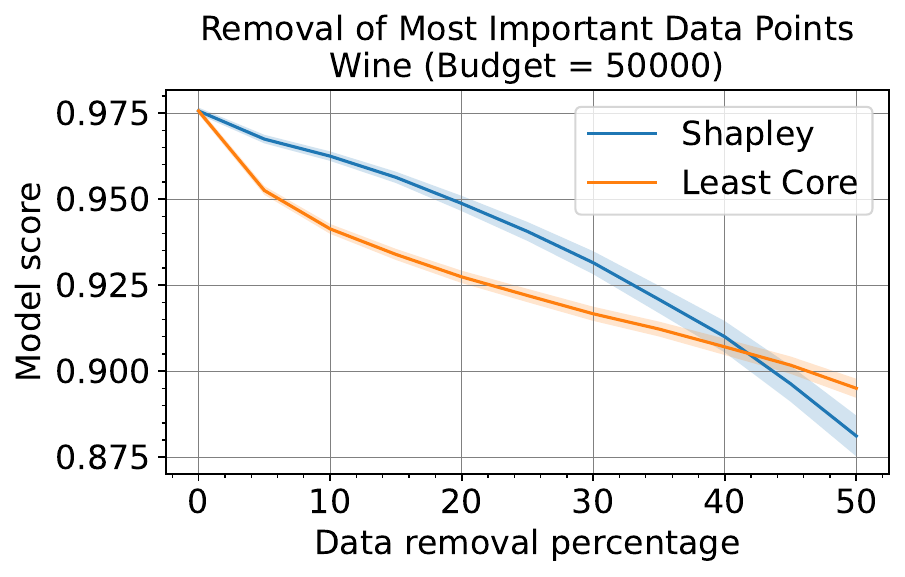}
        \label{fig:xai_data_valuation_wine}
    \end{subfigure}
    \begin{subfigure}[t]{0.49\textwidth}
        \centering
        \caption{Diabetes}
        \includegraphics[width=0.95\textwidth]{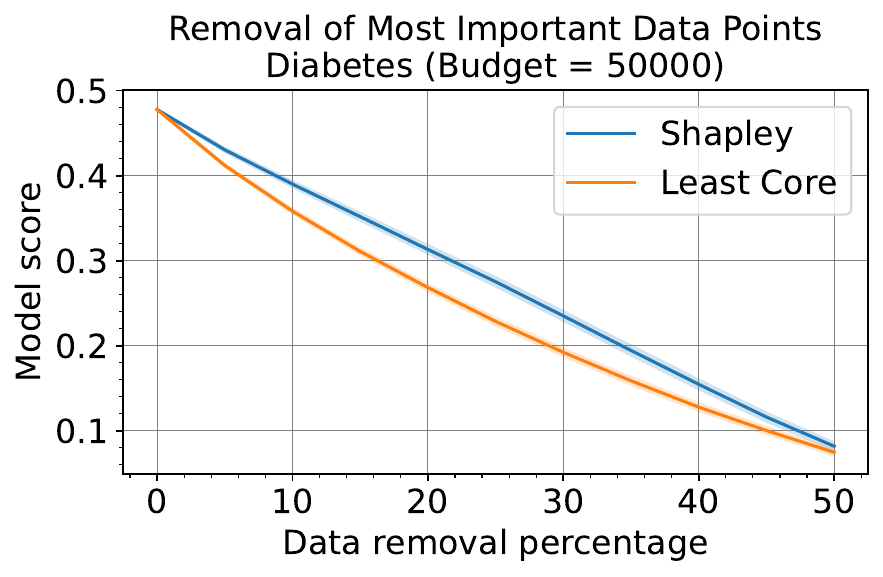}
        \label{fig:xai_data_valuation_diabetes}
    \end{subfigure}
    \caption{Data Valuation Experiments from Section~\ref{sec:empirical_xai:data_valuation} on the Wine and Diabetes datasets.}
    \label{fig:xai_models_appendix}
\end{figure*}

\end{document}